\documentclass[11pt]{article}

\usepackage[T1]{fontenc}
\usepackage[utf8]{inputenc}
\usepackage{lmodern}
\usepackage{amsmath,amssymb,amsthm,mathtools}
\usepackage{booktabs}
\usepackage{enumitem}
\usepackage{hyperref}
\usepackage{cleveref}
\usepackage[a4paper,margin=1in]{geometry}

\hypersetup{
  colorlinks=true,
  linkcolor=blue,
  citecolor=blue,
  urlcolor=blue
}

\newtheorem{theorem}{Theorem}[section]
\newtheorem{definition}[theorem]{Definition}
\newtheorem{proposition}[theorem]{Proposition}
\newtheorem{corollary}[theorem]{Corollary}
\newtheorem{lemma}[theorem]{Lemma}
\newtheorem{remark}[theorem]{Remark}

\title{FDY-privacy: Graded Anonymity and Unlinkability}
\author{Murat Moran}
\date{\today}

\newcommand{\DedF}{\operatorname{Ded}_{\mathrm{FuzzyDY}}}
\newcommand{\DedC}{\operatorname{Ded}_{\mathrm{DY}}}
\newcommand{\ModelDY}{\mathcal{M}_{DY}}
\newcommand{\ModelFuzzy}{\mathcal{M}_{FuzzyDY}}

\title{Graded Symbolic Verification with a Fuzzy Dolev-Yao Attacker Model}
\author{Murat Moran}
\date{\today}

\begin{document}
\maketitle

\begin{abstract}
Classical symbolic protocol verification under Dolev--Yao uses binary attacker knowledge (known/unknown). This abstraction misses cumulative side-channel settings, where repeated noisy observations progressively improve attacker knowledge. We model this process with a graded attacker view \(\mu_K\in[0,1]\), product T-norm leak updates, and finite-grid explicit-state execution in Modified Murphi.

The method is optimised with exact concept-lattice attribute reducts and exposes threshold-driven safe-to-fail transitions that are not represented in corresponding binary runs under the same bounded assumptions. Executed results on symmetric and asymmetric protocols, including Needham--Schroeder--Lowe (NSL), show that baseline models passing under crisp semantics can fail once cumulative side-channel leakage is enabled.
\end{abstract}


\maketitle

\section{Introduction}
\label{sec:intro}
The Dolev--Yao (DY) model~\cite{dolev1983security} is a standard basis for symbolic protocol verification under perfect cryptography. Its attacker knowledge semantics are binary (known/unknown). However, in side-channel settings, observations are partial, noisy, and cumulative. This mismatch can hide attack paths that emerge only after multiple weak observations. A canonical historical example is Kocher's timing attack line, which showed that repeated timing measurements can leak private-key information in RSA implementations~\cite{kocher1996timing}.

The method in this paper extends symbolic analysis with a graded attacker view. High-assurance tools such as FDR-based checkers~\cite{fdr}, ProVerif~\cite{blanchet2016proverif}, Maude-NPA~\cite{Escobar2009}, and Murphi~\cite{dill1996murphi} are effective for crisp derivability, but they do not natively model incremental knowledge-quality updates inside the attacker state. The Fuzzy Dolev-Yao (FuzzyDY) extension addresses this by replacing binary possession with a knowledge map \(\mu:\mathcal{M}\rightarrow[0,1]\), lifting deduction via Zadeh's Extension Principle~\cite{zadeh1965fuzzy}, and applying T-norm-based leak updates.

This graded view is needed for three attack classes that crisp DY cannot represent directly. \textbf{Cumulative sub-threshold observations} capture multi-step evidence accumulation, where no single observation is sufficient, but the sequence is. \textbf{Near-threshold attack transitions} capture safe-to-fail behaviour after a small additional observation update. \textbf{Ranked attack feasibility} enables comparing attack paths by required knowledge quality rather than treating all partial knowledge as one unknown state. These attack classes are captured by our graded semantics and examined empirically.

\subsection{Motivation and Aim}

The primary motivation for this work is the rigorous capture of safe-to-fail transitions under incremental side-channel leakage. In the classical binary DY model, an attacker's knowledge state is discrete, which masks threshold-driven vulnerabilities. By shifting to a graded symbolic framework, we expose three classes of attack dynamics that are not represented in binary explicit-state enumeration:
\begin{enumerate}
    \item \textit{Cumulative sub-threshold observations:} Multiple noisy observations may individually be insufficient to compromise a key, but under repeated fuzzy intersection, their cumulative evidence enables a protocol break.
    \item \textit{Near-threshold safe-to-fail transitions:} A protocol execution may switch from safe to unsafe following a minor observation that pushes the adversary's epistemic certainty across an actionable threshold.
    \item \textit{Ranked attack feasibility:} Rather than collapsing all partial knowledge into a single "unknown" state, attack traces can be quantitatively ranked by the required quality of the adversary's $\alpha$-cut support.
\end{enumerate}

Crucially, our novelty boundary is not the rediscovery of classic algebraic flaws (e.g., the unfixed Needham-Schroeder impersonation attack). Instead, we show that a protocol verified as safe under binary DY semantics (such as fixed Needham--Schroeder--Lowe) can become vulnerable when graded side-channel evidence accumulates across steps.

This paper is organised around the following three research questions:
\begin{enumerate}[noitemsep,topsep=0pt]
    \item \textbf{Semantics}: Can symbolic attacker knowledge be modelled as a graded state that accumulates side-channel observations while remaining consistent with crisp DY behaviour at binary endpoints?
    \item \textbf{Security Outcomes}: Does this framework expose authentication/confidentiality threshold crossings that are not visible in binary runs of the same protocol logic?
    \item \textbf{Scalability}: Can the concept-lattice~\cite{lobo2023fre} reduce fuzzy relation equations and state dimensionality while preserving encoded invariant outcomes?
\end{enumerate}

\subsection{Contributions and Scope}

This paper contributes a graded extension of the classical DY adversary model for cumulative side-channel observations. Standard symbolic deduction is lifted with Zadeh’s Extension Principle and sequential updates are modelled with T-norms, producing a quantitative evolution law for adversary-view quality~\cite{zadeh1965fuzzy,klir1995fuzzy}. For tractability in explicit-state checking, we integrate exact concept-lattice reduction to prune mathematically redundant state variables while preserving encoded verdicts.

The article is realised through four contributions: (i) a graded symbolic semantics for attacker knowledge in which each term carries a degree of possession, (ii) graded confidentiality and authentication invariants under a fuzzy intruder model, (iii) a modified explicit-state Murphi workflow with finite-precision reals and a reusable fuzzy C++ core (including Product T-norm accumulation and \(\alpha\)-cut interval propagation), and (iv) executed validation on NSL (safe and leaky variants) and symmetric-key families (NSSK, Yahalom, Otway--Rees, Woo--Lam), including observed authentication boundary crossings.

The remainder of this paper is organised as follows: Section~\ref{sec:related} presents a comparison of our method against symbolic and computational methods on side-channel verification. Section~\ref{sec:threat_model} defines threat assumptions and the scope of the model. Section~\ref{sec:formalizing} presents preliminaries on fuzzy arithmetic, the formal model and security invariants. Section~\ref{sec:architecture} presents operational semantics and formal guarantees. Section~\ref{sec:implement} presents explicit-state NSL concretisation, implementation boundaries, and state-space reduction mechanisms. Section~\ref{sec:validation} reports executed case studies. Section~\ref{sec:discussion} discusses limits. Section~\ref{sec:conclusion} concludes and outlines future extensions, with reproducibility details in the appendices.

\section{Related Work}
\label{sec:related}

Navigating the boundary between symbolic and computational verification requires distinguishing both the nature of the modelled uncertainty and the target security properties~\cite{abadi2002reconciling,blanchet2012symbolic}. Computational formal methods, such as CryptoVerif, operate by computing \textit{aleatory} uncertainty: the probabilistic chance of a polynomial-time adversary breaking a cryptographic primitive, validated via sequences of games~\cite{blanchet2006cryptoverif}. While these tools provide rigorous computational soundness guarantees, they do not natively model the gradual accumulation of side-channel evidence within a discrete state-machine exploration.

Modern symbolic tools (e.g., ProVerif~\cite{blanchet2016proverif}, Tamarin~\cite{meier2013tamarin}, Scyther~\cite{cremers2008scyther}, FDR~\cite{fdr}, and Maude-NPA~\cite{escobar2009maudenpa}) are highly effective for crisp derivability and observational equivalence, but they model attacker knowledge as binary (derivable/non-derivable). Our framework stays within symbolic transition semantics under idealised cryptography and replaces the binary possession flag with a continuous knowledge map \(\mu_K : \mathcal{M} \rightarrow [0,1]\). The result is an epistemic uncertainty process over attacker knowledge quality under repeated side-channel observations, obtained by lifting classical Dolev--Yao deduction with Zadeh's Extension Principle~\cite{zadeh1965fuzzy}.

The relation to Degrees of Security~\cite{basin2010degrees} is conceptual. Degrees of Security varies attacker capability classes (e.g., compromise classes), whereas this work varies accumulated observation quality within the same protocol run. This also complements compositional security lines by adding quantitative leakage-state semantics to a bounded symbolic setting~\cite{backes2006composition}. More recent work on protocol engineering also emphasises a broader semantic divide between operational specifications and formal models~\cite{basin2025village}. The gap addressed in this paper is not reachability itself, but knowledge-quality evolution under repeated side-channel observations.

Side-channel formal methods at lower abstraction layers include constant-time and relational non-interference analyses for implementations/binaries~\cite{almeida2016ctverif,daniel2020binsec_rel}, as well as quantitative symbolic leakage analysis for probabilistic programmes~\cite{malacaria2018symbolic_sc}. Leakage-resilient models in computational cryptography~\cite{alawatugoda2021lrake} address bounded leakage with game-based assumptions. For Murphi-based verification, early protocol analyses~\cite{MMS97} evaluated NSL/NSPK in a binary setting. In contrast, this article evaluates protocol-level symbolic models, such as including NSL, with a graded attacker view, and shows pass/fail divergence between crisp and graded runs under accumulated side-channel observations.

\section{Threat Model and Assumptions}
\label{sec:threat_model}
This section defines network assumptions, attacker capabilities, leakage setting, and the model boundaries used by the formalisation in Section~\ref{sec:formalizing} and the explicit-state semantics in Section~\ref{sec:architecture}.

\subsection{Network and Adversary Capabilities}
We assume the standard Dolev--Yao network adversary: the attacker controls the channel and can intercept, block, replay, reorder, and inject messages~\cite{dolev1983security}.
Operationally, the adversary is encoded as an active protocol principal with interception and message-synthesis rules over current knowledge, matching the Murphi security-protocol modelling style introduced by Mitchell \textit{et al.}~\cite{MMS97}.

We also retain the perfect-cryptography premise: cryptographic primitives are not broken algebraically unless the required keys or terms are already derivable in the symbolic model. Thus, attacks arise from protocol logic and leakage-enabled knowledge evolution, not from direct cryptanalytic breaks of idealised primitives.

Historically, Murphi-based analysis has also been used to relax strict black-box assumptions and encode additional algebraic attacker effects when needed~\cite{MMS97}. In this paper, we preserve the symbolic cryptographic abstraction but enrich attacker knowledge semantics with graded side-channel evidence accumulation.

\subsection{The Side-Channel Leakage Model}
The physical threat model targets shared compute deployments (cloud VMs, containerised services, edge devices), where timing, cache, and power effects can reveal partial information about secrets. In this setting, side-channel signal quality is typically noisy and incremental.

Accordingly, we model observations as cumulative evidence: repeated measurements progressively refine the attacker candidate set for key-related terms. The main risk is a delayed threshold crossing, where individually weak measurements eventually reduce uncertainty enough to enable a symbolic protocol break.

\subsection{Operational Assumptions and Scope}
All claims are interpreted under the following explicit assumptions:
\begin{itemize}[noitemsep,topsep=0pt]
    \item \textbf{Bounded symbolic universe and finite discretization}: Protocol sessions, agent instances, and the term algebra are strictly bounded to obtain finite-state transition systems. Fuzzy memberships are discretized and stored with finite precision (e.g., real(4,2)).
    \item \textbf{Abstraction preservation}: The framework evaluates finite-state protocol models under an abstraction level, rather than full wire-format implementations. We intentionally abstract away non-critical message details that are not required to decide the encoded invariants.
    \item \textbf{Soundness and approximation risks}: Because explicit-state abstractions are used, the model carries inherent approximation risks. If an omitted field contributes to a real attack precondition, the abstraction represents an under-approximation. Conversely, if replay/forgery abstractions are too permissive, discovered traces may not be deployment feasible attacks (over-approximation). To mitigate these risks, we conditionally bind our claims to explicit invariants and the provided over-approximation premise in Theorem \ref{thm:projection_soundness}.
\end{itemize}

\section{Formalising the Fuzzy Dolev-Yao Model}
\label{sec:formalizing}
The model extends classical Dolev--Yao by replacing binary attacker possession with graded adversary view over a finite symbolic universe. It captures \emph{epistemic uncertainty} in attacker inference from protocol traffic and side-channel evidence while preserving finite-state executability through discretization and bounded sessions. Here, we first give preliminaries on fuzzy sets and arithmetic.

\subsection{Preliminaries on Fuzzy Sets and Arithmetic}
\label{sec:preliminaries}

To formalise the gradual degradation of an adversary's epistemic state, our framework relies on foundational concepts from fuzzy set theory and fuzzy arithmetic. We briefly review the relevant definitions and operators used throughout this paper.

\textbf{Fuzzy Sets and Fuzzy Numbers.} 
Let $U$ be a classical set of objects, termed the universe of discourse. A fuzzy set $A$ in $U$ is characterised by a membership function $\mu_A: U \rightarrow [0,1]$, where $\mu_A(x)$ represents the grade of membership of the element $x$ in $A$. The closer the value of $\mu_A(x)$ is to 1, the more $x$ belongs to $A$. A \textit{fuzzy number} is a specific type of fuzzy set defined on the real line $\mathbb{R}$ that satisfies two additional constraints: it must be normalised (i.e., there exists at least one $x$ such that $\mu_A(x) = 1$) and it must be convex. In this work, fuzzy numbers are utilised to quantify the attacker's continuous epistemic uncertainty regarding specific cryptographic terms.

\textbf{Fuzzy Set Operations and T-norms.} 
Standard fuzzy set operations generalise classical crisp set logic. The standard intersection (which models logical conjunction) and standard union (which models logical disjunction) of two fuzzy sets $A$ and $B$ are defined pointwise using the \texttt{min} and \texttt{max} operators, respectively: $\mu_{A \cap B}(x) = \min(\mu_A(x), \mu_B(x))$ and $\mu_{A \cup B}(x) = \max(\mu_A(x), \mu_B(x))$. 

To model the accumulation of independent side-channel observations, we require a generalised conjunction operator. In fuzzy logic, generalised intersections are formulated via Triangular Norms (T-norms). A T-norm is a binary operation $T$ that is commutative, associative, monotonic, and has 1 as an identity element. While $\min$ is the standard idempotent T-norm (T$_{min}$), our framework defaults to the \textit{Product T-norm} (T$_{\Pi}$), defined as:
\[
T_{\Pi}(a,b)=a\cdot b,\qquad
\mu_{t+1}(M)=D\!\left(T_{\Pi}\!\big(\mu_t(M),\mu_L(M)\big)\right),
\]
where \(D\) is a conservative discretization operator and leak \(L\). In the context of iterative operations, $D$ acts as a cell-to-cell mapping, projecting the continuous results of T-norm intersections back onto the finite grid $G$ prior to state storage. This models the conjunction of prior adversary view and the new side-channel evidence.

The Product T-norm is strictly monotonic and models the multiplicative contraction of probability-like membership mass, producing sharper candidate elimination under repeated observations.

\textbf{Zadeh's Extension Principle and Sup-Min Composition.}
The extension principle~\cite{zadeh1965fuzzy} is a fundamental mechanism that allows classical mathematical functions to be extended to operate on fuzzy domains. Given a function $f: X_1 \times \dots \times X_n \rightarrow Y$ and fuzzy sets $A_1, \dots, A_n$ defined over $X_1, \dots, X_n$, the extension principle maps these sets to a fuzzy set $B$ on $Y$ via the \textit{sup-min} composition:
\begin{equation}
    \mu_B(y) = \sup_{\substack{x_1, \dots, x_n \\ y = f(x_1, \dots, x_n)}} \min \big( \mu_{A_1}(x_1), \dots, \mu_{A_n}(x_n) \big)
\end{equation}
where \(\sup\) is the least upper bound over all decompositions yielding \(y\).  
This is the sup-min mechanism used in graded deduction.

If the inverse image $f^{-1}(y)$ is empty, $\mu_B(y) = 0$. In our execution model, the extension principle provides the rigorous mathematical basis for lifting the discrete Dolev-Yao term constructors into the fuzzy domain, enabling continuous arithmetic computations on cryptographic message derivations.

\textbf{$\alpha$-Cuts and Threshold Semantics.} 
To bridge the continuous fuzzy domain and binary verification verdicts, we use the concept of $\alpha$-cuts. For a given threshold $\alpha \in (0, 1]$, the $\alpha$-cut (or $\alpha$-level set) of a fuzzy set $A$ is the crisp set $A_\alpha = \{x \in U \mid \mu_A(x) \geq \alpha\}$. We employ $\alpha$-cuts as strict operational boundaries ($\alpha_{threshold}$) to evaluate safe-to-fail transitions: an attack term is considered practically deducible by the adversary only when its fuzzy membership degree breaches the threshold, shrinking the continuous support down to a critical vulnerability bound.

\subsection{Term Algebra and Syntax}
To evaluate protocol security, we model messages as abstract terms generated by a term algebra $\mathcal{M}$ as in~\cite{abadi2018appliedpi}. Let $\Sigma$ be a signature consisting of a finite set of cryptographic function symbols, each with a specific arity. Given a set of atomic names $\mathcal{A}$ and a set of variables $\mathcal{X}$, the universe of terms $\mathcal{M}$ is defined inductively by the following grammar:
\begin{align*}
M, N \in \mathcal{M} &::= a \mid x \mid f(M_1, \dots, M_k) \\
a, b, c, k \in \mathcal{A} &::= k \text{ (Symmetric keys)} \mid N \text{ (Nonces)} \mid 
        sk_A \text{ (Secret keys)} \mid pk_A \text{ (Public keys)}  \\
x, y, z \in \mathcal{X} &::= \text{variables} \\               
f \in \Sigma &::= \textsf{senc} \mid \textsf{sdec} \mid \textsf{penc} \mid \textsf{pdec} \mid \textsf{sign} \mid \textsf{verify}
\end{align*}

We denote the set of ground terms as $\mathcal{M}_G \subseteq \mathcal{M}$ and follow the standard notation in symbolic verification.
\begin{itemize}
    \item $\{M\}_k$ denotes symmetric shared-key encryption, $\textsf{senc}(M, k)$.
    \item $\{M\}_{pk_A}$ denotes asymmetric public-key encryption, $\textsf{penc}(M, pk_A)$.
    \item $\{M\}_{sk_A}$ denotes a digital signature, $\textsf{sign}(M, sk_A)$.
\end{itemize}
In the verifier implementation, this abstract algebra is modelled as finite typed Murphi message records (e.g., \texttt{source}, \texttt{dest}, \texttt{key}, \texttt{mType}) so terms can be enumerated in explicit state space, consistent with the original Murphi protocol-analysis methodology~\cite{MMS97}.

\subsection{Equational Theory and Crisp Deduction}
The term algebra is quotiented by an equational theory \(=\) under perfect cryptography. The signature \(\Sigma\) is governed by cancellation equalities \(\textsf{sdec}(\textsf{senc}(M,k),k)=M\), \( \textsf{pdec}(\textsf{penc}(M,pk_A),sk_A)=M\), and \(\textsf{verify}(\textsf{sign}(M,sk_A),pk_A)=\textsf{true}\). No other equalities exist, so adversary derivations cannot exploit unintended algebraic structures.

We formalise the adversary's computational capabilities via a deduction relation $\vdash \subseteq \mathcal{P}(\mathcal{M}) \times \mathcal{M}$. Let $S \subseteq \mathcal{M}$ denote the set of terms currently possessed by the attacker. The statement $S \vdash M$ means that the attacker can derive the term $M$ from $S$.
The relation is the smallest set closed under the following inference rules:
$$
\begin{array}{ccc}
    \textbf{(Axiom)} & \textbf{(Composition)} & \textbf{(Decomposition)} \\
    \frac{M \in S}{S \vdash M} & \frac{S \vdash M_1 \quad S \vdash M_2}{S \vdash (M_1, M_2)} &     \frac{S \vdash (M_1, M_2)}{S \vdash M_i \ (i \in \{1,2\})} \\
    \\
    \textbf{(Symmetric Encryption)} & \textbf{(Symmetric Decryption)} &\textbf{(Equivalence)}\\
    \frac{S \vdash M \quad S \vdash k}{S \vdash \textsf{senc}(M,k)} & 
    \frac{S \vdash \textsf{senc}(M,k) \quad S \vdash k}{S \vdash M} & \frac{S \vdash M \quad \Sigma \vdash M = M'}{S \vdash M'} \\
    \\
    \textbf{(Public-Key Encryption)} & \textbf{(Public-Key Decryption)} \\
    \frac{S \vdash M \quad S \vdash pk_A}{S \vdash \textsf{penc}(M,pk_A)} &
    \frac{S \vdash \textsf{penc}(M,pk_A) \quad S \vdash sk_A}{S \vdash M} &
\end{array}
$$

\subsection{The Graded Adversary Model and Epistemic Uncertainty}
Classical Dolev--Yao can be viewed as a crisp knowledge algebra where the attacker's possession (characteristic function over the knowledge set $K$) $\chi_K$ is Boolean: \(\chi_K:\mathcal{M}\rightarrow\{0,1\}\). The graded model generalises this into a knowledge map \(\mu_K:\mathcal{M}\rightarrow[0,1]\), where \(\mu_K(M)\) encodes \emph{degree of possession} of term \(M\). This framing targets epistemic uncertainty (vagueness/imprecision in the adversary view) rather than aleatory uncertainty (randomness of events).

In the executable Murphi model, attacker exposure is represented directly by global state variables (network slots, intruder buffers, and derivable-term flags). The graded map \(\mu_K\) is interpreted over this explicit knowledge state at each reachable node.

The transition relation is no longer just set expansion by the updated knowledge set K', \(K\subseteq K'\); it becomes a graded evolution operator over \([0,1]\)-valued knowledge states. We denote this graded transition relation simply as $\rightarrow_F$. Operationally, $\rightarrow_F$ is parametrised by the rule lifting mechanics (Zadeh's Extension Principle), the discretization policy $D$, and the observation-combination operator $T$ (T-norm). The modified Murphi engine executes this concrete semantics over the explicit state space.

Intuitively, "degree of possession" represents an adversary confidence interval induced by noisy physical side-channel signals. For example, a side-channel trace may reduce an AES-like key search space from \(2^{128}\) to approximately \(2^{50}\) candidates; the model captures this as a reduction in the fuzzy entropy of the key symbol and can trigger a violation once the remaining uncertainty drops below a modelled brute-force threshold.

\subsubsection{Dimensions of Leakage}
Following the compromise-dimension mindset in symbolic protocol analysis~\cite{basin2010degrees}, we model side-channel evidence along three explicit axes: (i) \emph{whose data} is exposed (e.g., initiator vs. responder knowledge variables), (ii) \emph{what secret class} is exposed (long-term keys, session keys, nonces, or derived terms), and (iii) \emph{what quality} the attacker gains (membership degree and support narrowing in \(\mu_K\)). This provides a structured adversary description beyond a single binary reveal capability.

\subsubsection{Accumulation Operator}
Leakage accumulation is the graded counterpart of one-shot reveal actions. In this work we use Product T-norm (\(T(a,b)=a \cdot b\)) with discretization after each update, so repeated observations progressively refine candidate sets and support safe-to-fail threshold analysis. The full update law is stated in Section~\ref{sec:tnorm_dynamics}.

\subsection{Graded Deduction via Zadeh's Extension Principle}
Let \(\mu_K:\mathcal{M} \rightarrow [0,1]\) be the attacker knowledge map and \(f\) a constructor/destructor over symbolic terms. We lift crisp derivation by Zadeh's Extension Principle:
\[
\mu_{f(A_1,\ldots,A_n)}(y)=\sup_{y=f(x_1,\ldots,x_n)} \min\!\big(\mu_{A_1}(x_1),\ldots,\mu_{A_n}(x_n)\big).
\]

Classical DY deduction rules are lifted to the fuzzy domain by applying conjunction through a T-norm (instantiated here via the standard $\min$ operator for conservative structural deduction). Let $\mu_S : \mathcal{M} \rightarrow [0,1]$ be the abstract fuzzy knowledge state of the adversary. We formalise the graded deduction relation, denoted $\mu_S \vdash_\mu M = v$, to express that a term $M$ is derivable from state $\mu_S$ to a membership degree of $v$. 

The graded deduction system is the smallest relation closed under the following inference rules:
\begin{align*}
&\text{\textbf{(Axiom)}} & &\frac{}{\mu_S \vdash_\mu M = \mu_S(M)} \\[1em]
&\text{\textbf{(Comp.)}} & &\frac{\mu_S \vdash_\mu M_1 = v_1 \quad \mu_S \vdash_\mu M_2 = v_2}{\mu_S \vdash_\mu (M_1, M_2) = \min(v_1, v_2)} \\[1em]
&\text{\textbf{(Decomp.)}} & &\frac{\mu_S \vdash_\mu (M_1, M_2) = v}{\mu_S \vdash_\mu M_i = v} \quad (i \in \{1,2\}) \\[1em]
&\text{\textbf{(Sym-Enc)}} & &\frac{\mu_S \vdash_\mu M = v_1 \quad \mu_S \vdash_\mu k = v_2}{\mu_S \vdash_\mu \textsf{senc}(M, k) = \min(v_1, v_2)} \\[1em]
&\text{\textbf{(Sym-Dec)}} & &\frac{\mu_S \vdash_\mu \textsf{senc}(M, k) = v_1 \quad \mu_S \vdash_\mu k = v_2}{\mu_S \vdash_\mu M = \min(v_1, v_2)} \\[1em]
\end{align*}
\begin{align*}
&\text{\textbf{(Pub-Enc)}} & &\frac{\mu_S \vdash_\mu M = v_1 \quad \mu_S \vdash_\mu pk_A = v_2}{\mu_S \vdash_\mu \textsf{penc}(M, pk_A) = \min(v_1, v_2)} \\[1em]
&\text{\textbf{(Pub-Dec)}} & &\frac{\mu_S \vdash_\mu \textsf{penc}(M, pk_A) = v_1 \quad \mu_S \vdash_\mu sk_A = v_2}{\mu_S \vdash_\mu M = \min(v_1, v_2)} \\[1em]
&\text{\textbf{(Equiv.)}} & &\frac{\mu_S \vdash_\mu M = v \quad \Sigma \vdash M = M'}{\mu_S \vdash_\mu M' = v}
\end{align*}

These lifted rules rigorously define knowledge evolution from partial initial information. Operationally, during explicit-state exploration in the execution boundary, this generic premise map $\mu_S$ is instantiated by the concrete attacker knowledge variable $\mu_K$ at each reachable state $q \in Q$.

\subsection{Leak Accumulation via T-Norm Dynamics}
\label{sec:tnorm_dynamics}
In binary DY models, attacker knowledge typically evolves purely by closure under crisp derivation rules. In our graded model, side-channel evidence accumulation is formalised as epistemic uncertainty reduction through fuzzy intersection. Let $\mu_{K_t}(M)$ denote the attacker's knowledge quality of term $M$ at step $t$, and let $\mu_{L_t}(M)$ denote the fractional value of a new side-channel observation. The state evolution is governed by the recursive law:
\begin{equation}
    \mu_{K_{t+1}}(M) = D\!\left(T(\mu_{K_t}(M), \mu_{L_t}(M))\right)
\end{equation}
where $T$ is the specified T-norm (the Product T-norm $T_{\Pi}$ by default in this article) and $D$ is the finite discretization operator defined in Section\ref{sec:preliminaries}.~\footnote{The Min T-norm, \(T_{min}\), is preferable when conservative intersection behaviour is desired to avoid aggressive attenuation from multiple weak observations.} By applying $T_{\Pi}$, repeated observations iteratively contract the membership mass, yielding progressively sharper candidate elimination.

To formally quantify the epistemic degradation of the adversary's knowledge, we evaluate the Hartley non-specificity over the discretized state. Let $W_\alpha = |\pi_\alpha(\mu_K)|$ denote the cardinality of the effective support of the $\alpha$-cut projection of the fuzzy attacker knowledge. The Hartley non-specificity at step $t$ is defined as $H_t = \log_2 |\mathrm{Supp}_t^\alpha|$. Because the Product T-norm enforces strict monotonicity under repeated observations, the support $\mathrm{Supp}_t^\alpha$ is non-increasing, guaranteeing that $H_t$ monotonically decreases as side-channel evidence accumulates. Consequently, a 50\% reduction in the support width $W_\alpha$ directly implies a 1-bit drop in the Hartley entropy: $W'_\alpha = \frac{1}{2}W_\alpha \implies H' = H - 1$.

\subsection{Graded Security Invariants}
To rigorously evaluate protocol security under the proposed semantics, classical binary invariants are lifted to graded forms over \(\mu_K\). We define both confidentiality and authentication conditions below.

\subsubsection{Formal Definition of Graded Confidentiality (Secrecy)}
In addition to authentication, we evaluate syntactic secrecy over our term algebra \(\mathcal{M}\). Let \(M_{sec}\in\mathcal{M}\) represent a target sensitive term generated during a protocol session between honest principals \(A, B\in\mathcal{A}_{agent}\). For instance, in the NSL case, the critical confidentiality target is the responder's freshly generated nonce, \(M_{sec}=N_b\).

In the classical binary DY model, the confidentiality invariant \(\Phi_{\text{sec}}^{crisp}\) holds if the attacker cannot derive the target term from observed knowledge:
\[
\Phi_{\text{sec}}^{crisp}(q)\triangleq K_{crisp}\not\vdash M_{sec}
\]

Under graded analysis, binary possession is replaced by \(\mu_K\). We define graded confidentiality as:
\[
\Phi_{\text{sec}}^{fuzzy}(q)\triangleq \mu_K(M_{sec})<\alpha_{threshold}
\]

\subsubsection{Formal Definition of Graded Authentication}
Authentication is defined as a correspondence property between protocol participant states over \(\mathcal{M}\).

Let \(\mathcal{A}_{agent}\subset\mathcal{A}\) be the set of atomic names representing protocol participants (e.g., \(A,B\)). Define:
\begin{itemize}
    \item \(\textsf{Commit}(q,B,A,M)\): Denotes that in state $q$, responder $B$ has completed the protocol run, believing it has securely communicated with initiator $A$, agreeing on a specific payload term $M \in \mathcal{M}$.
   
    \item \(\textsf{Running}(q,A,B,M)\): Denotes that in state $q$, initiator $A$ has actively initiated a protocol run with $B$ using the exact same term $M \in \mathcal{M}$.
\end{itemize}

In the classical binary DY model, the crisp authentication invariant $\Phi_{\text{auth}}^{crisp}$ is defined as a strict correspondence assertion:
\begin{align*}
\Phi_{\text{auth}}^{crisp}(q) &\triangleq \forall A,B\in\mathcal{A}_{agent},\ \forall M\in\mathcal{M}:\\
&\textsf{Commit}(q,B,A,M)\implies \textsf{Running}(q,A,B,M).
\end{align*}
If $\Phi_{auth}^{crisp}(q)$ holds, the crisp attacker cannot deduce the required challenge response $M_{forge} \in \mathcal{M}$ from their current knowledge base at state $q$ to trick the responder into committing (i.e., $K_q \not\vdash M_{forge}$). 

Under the graded analysis framework, we evaluate the Graded Authentication invariant, $\Phi_{auth}^{fuzzy}$, over the same correspondence logic. However, its truth valuation within the explicit-state execution boundary is now determined by the epistemic knowledge map $\mu_K$. A safe-to-fail transition occurs---resulting in $\neg\Phi_{auth}^{fuzzy}(q)$---if there exists a reachable state where $\textsf{Commit}(q,B,A,M)$ evaluates to true while $\textsf{Running}(q,A,B,M)$ evaluates to false. This violation is triggered precisely when the graded attacker accumulates enough side-channel evidence such that the membership degree of the required forged term crosses the actionable threshold:
\[
    \mu_{K_q}(M_{forge}) \ge \alpha_{threshold}
\]
allowing the intruder to syntactically satisfy the responder's cryptographic checks without the initiator's true participation.

\subsection{Threshold-Driven Violation Semantics}
Because classical DY reasoning is restricted to the binary domain $\chi_K : \mathcal{M} \rightarrow \{0, 1\}$, any safe-to-fail transition requires a discrete, full-term derivation. Under our graded extension, reachability is governed by the continuous evolution of $\mu_K$, enabling the formalisation of threshold-driven violation semantics. 

Let $q_t \in Q_{FuzzyDY}$ be the concrete state at step $t$, and let $\Phi^{fuzzy}$ be a graded security invariant evaluated at the operational threshold $\alpha_{threshold}$. The graded framework formally exposes the following violation dynamics:

\begin{itemize}
    \item \textbf{Cumulative Sub-Threshold Reachability:} Let $M_{forge} \in \mathcal{M}$ be a critical payload required to violate $\Phi^{fuzzy}$. There exist trace sequences $q_0 \rightarrow_F \dots \rightarrow_F q_t \rightarrow_F \dots \rightarrow_F q_{t+k}$ such that at step $t$, $\mu_{K_t}(M_{forge}) \ll \alpha_{threshold}$ (the invariant holds), but after $k$ iterative side-channel updates via the T-norm accumulation operator, $\mu_{K_{t+k}}(M_{forge}) = D(T(\mu_{K_t}, \mu_{L_{1..k}})) \ge \alpha_{threshold}$. The crisp $\mathcal{T}_\alpha$ projection---a bounding classical transition system that restricts Boolean Dolev-Yao derivations strictly to the $\alpha$-level cut set of the fuzzy knowledge map---cannot capture this delayed satisfiability because no single discrete step $i \in \{1 \dots k\}$ yields a full derivation.

    \item \textbf{Near-Threshold State Transitions:} The framework captures stuttering control-state transitions that alter only the epistemic knowledge map. A transition $q_t \rightarrow_F q_{t+1}$ may execute no new protocol logic, but solely apply an observation update $\mu_{L_t}$ such that the Hartley non-specificity drops ($H_{t+1} < H_t$), pushing the state immediately across the safety boundary $\neg\Phi^{fuzzy}(q_{t+1})$.
    
    \item \textbf{Graded Trace Ranking:} For a set of violating runs leading to states $Q_{viol} \subset Q_{FuzzyDY}$ where $\neg\Phi^{fuzzy}$ holds, the explicit-state enumeration evaluates the terminal knowledge map $\mu_K$ at each violating node. This allows attack paths to be strictly ordered by the epistemic quality of the adversary's view (e.g., maximizing $\mu_K(M_{sec})$ or minimizing the $\alpha$-cut support width $W_t^\alpha$), a distinction inherently collapsed by binary verification models.
\end{itemize}
These theoretical semantics are concretely instantiated in Section VI, where staged leakage updates trigger $\neg\Phi_{auth}^{fuzzy}$ and $\neg\Phi_{sec}^{fuzzy}$ boundaries in our explicit-state Murphi executions.

\section{Operational Semantics and Explicit-State Verification}
\label{sec:architecture}
Executing the abstract continuous deduction framework within a finite verification engine requires strict operational bounds. This section formalises the explicit-state enumeration architecture and the finite-discretisation semantics necessary to compute the graded transition system $\mathcal{T}_{FuzzyDY}$.

\subsection{Explicit-State Enumeration Rationale}
Explicit-state enumeration is a verification technique that explores the global reachability graph by generating and storing each reachable state individually~\cite{dill1996murphi,lamport2002specifying}. Unlike symbolic approaches that reason over logic-defined state sets, explicit enumeration evaluates concrete values of state variables at each transition step.

Mitchell \textit{et al.} demonstrated that this bounded explicit-state style is effective for security protocols in Murphi~\cite{MMS97}. This approach is necessary for graded analysis in this work because fuzzy updates are arithmetic on concrete state vectors: T-norm accumulation, $\alpha$-cut/support computations, and entropy/Hartley calculations are executed directly on the current state before successor generation.

\subsection{Operational Semantics: The Explicit-State Transition System}
\begin{definition}[Crisp DY Transition System]
A crisp symbolic protocol model or transition system is a tuple
\begin{equation}
\mathcal{T}_{DY} = (Q_{DY}, Q_0, \rightarrow, \Phi)
\end{equation}
where each state \(q=(tr,K,th)\) contains trace \(tr\), attacker knowledge \(K\subseteq \mathcal{M}\), and honest-thread state \(th\). The term algebra \(\mathcal{M}\) is finite under bounded sessions. The transition relation \(\rightarrow\) is induced by protocol actions and symbolic derivation rules (composition, decomposition, encryption/decryption), and \(\Phi\) is a Boolean security predicate (e.g., authentication correspondence)~\cite{dolev1983security,dill1996murphi,MMS97}.
\end{definition}

Following Murphi's explicit-state formulation, protocol execution is modelled as a finite guarded-command state machine
\[
A=(Q,Q_0,\Delta),
\]
where \(Q\) is the finite set of global states (assignments to all model variables), \(Q_0\subseteq Q\) is the set of initial states, and \(\Delta\) is the set of guarded rules. Each rule has a Boolean guard and an atomic action; if the guard is true in state \(q\in Q\), the rule can fire to produce successor \(q'\). Concurrency is asynchronous interleaving of enabled rules, including attacker rules for intercept, derive, leak-update, and inject actions~\cite{dill1996murphi,MMS97}.

\begin{definition}[FuzzyDY Transition System]
The graded extension is formalised as the tuple:
\begin{equation}
    \mathcal{T}_{FuzzyDY} = (Q_{FuzzyDY}, Q_{F,0}, \rightarrow_F, D, T, \mathcal{L})
\end{equation}
where each concrete state $q \in Q_{FuzzyDY}$ includes a fuzzy attacker knowledge map $\mu_{K_q} : \mathcal{M} \to [0,1]$. For leakage-tracking variables, $\mu_{K_q}$ is interpreted as a possibility distribution over candidates (representing knowledge quality), rather than a monotone DY term-inventory flag. The discretizer $D : [0,1] \to \{0, \delta, 2\delta, \dots, 1\}$ is monotone and satisfies $D(0) = 0$ and $D(1) = 1$. The operator $T$ is the chosen T-norm (e.g., $T_{\Pi}$), and $\mathcal{L}$ is the set of possible side-channel leakage observation maps~\cite{zadeh1965fuzzy,klir1995fuzzy}.
\end{definition}

For a cryptographic constructor $f$, graded structural derivation at a given state $q$ follows the extension-principle form:
\begin{equation}
    \mu_{K_q}(z) = D\left( \sup_{z=f(x,y)} \min(\mu_{K_q}(x), \mu_{K_q}(y)) \right)
\end{equation}
When the system executes a leakage transition $q \rightarrow_F q'$ under an observation $\mu_L \in \mathcal{L}$, the epistemic update is modelled by the T-norm intersection:
\begin{equation}
    \mu_{K_{q'}}(M) = D\Big(T\big(\mu_{K_q}(M), \mu_L(M)\big)\Big)
\end{equation}

\begin{definition}[\(\alpha\)-cut Projection]
For \(\alpha\in(0,1]\), define
\[
\pi_\alpha(\mu_K)=\{M\in\mathcal{M}\mid \mu_K(M)\ge \alpha\}
\]
The projected crisp system \(\mathcal{T}_{\alpha}\) executes derivations over \(\pi_\alpha(\mu_K)\) using transitions compatible with this projection~\cite{zadeh1965fuzzy,klir1995fuzzy}.
\end{definition}

\subsection{Formal Guarantees and Proofs}
\begin{theorem}[Crisp Functional Equivalence (Deduction Fragment)]
In the absence of fractional leakage (i.e., $\forall M \in \mathcal{M}, \mu_K(M) \in \{0, 1\}$), the Graded Deduction relation $\vdash_\mu$ is exactly equivalent to the classical DY deduction relation $\vdash$.
\end{theorem}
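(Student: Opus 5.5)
The plan is to make the equivalence precise and then prove two inclusions by induction on derivations. Fix a crisp state $\mu_S:\mathcal{M}\to\{0,1\}$ and let $S=\{M\in\mathcal{M}\mid \mu_S(M)=1\}$, so that $\mu_S=\chi_S$. Because the graded relation $\vdash_\mu$ is a relation that may assign several values to one term (the Axiom always yields $\mu_S(M)$, while other rules may yield more), I read ``equivalence'' as follows:
\[
S\vdash M \iff \mu_S\vdash_\mu M=1,
\]
together with the statement that every derivable value lies in $\{0,1\}$. Equivalently, the supremum of derivable values is $\chi_{\overline{S}}(M)$, where $\overline{S}$ is the DY closure of $S$.

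\emph{Step 1: values stay in $\{0,1\}$.} I argue by induction on the height of a $\vdash_\mu$-derivation. The Axiom yields $\mu_S(M)\in\{0,1\}$. The Comp., Sym-Enc, Sym-Dec, Pub-Enc and Pub-Dec rules return $\min(v_1,v_2)$, and $\{0,1\}$ is closed under $\min$. Decomp. and Equiv. propagate the premise value unchanged.

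\emph{Step 2: soundness} ($\mu_S\vdash_\mu M=1 \Rightarrow S\vdash M$). I use induction on the derivation, with the key fact that $\min(v_1,v_2)=1$ forces $v_1=v_2=1$. Then each graded rule with conclusion value $1$ has all premises at value $1$. By the induction hypothesis these premises are crisply derivable, and the identically shaped crisp rule fires. The Axiom case reduces to $\mu_S(M)=1\iff M\in S$.

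\emph{Step 3: completeness} ($S\vdash M \Rightarrow \mu_S\vdash_\mu M=1$). This is again induction, now on the crisp derivation, with a rule-by-rule correspondence. For the crisp Axiom, $M\in S$ gives $\mu_S(M)=1$. Binary rules give $\min(1,1)=1$, and unary rules preserve $1$.

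I expect the main obstacle to be not the inductions but the degenerate value $0$. The graded Axiom derives $\mu_S\vdash_\mu M=0$ for every $M\notin S$, and the Equiv. and Decomp. rules can generate further $0$-valued judgements. For this reason the theorem cannot be read literally as identity of the two relations. It must be stated as the correspondence ``derivable at value $1$ iff crisply derivable,'' or via the supremum/maximal-value reading of the extension principle. I would make this explicit at the start, noting that zero-valued judgements carry no information because $0$ is absorbing for $\min$.

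A secondary point is the rules with no graded counterpart. The crisp signature mentions $\textsf{sign}$ and $\textsf{verify}$ only through Equiv., so they are handled uniformly. Both systems have the same rule skeleton, and the induction goes through unchanged.

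Finally, I would remark that the discretiser plays no role here, since $D(0)=0$ and $D(1)=1$. Hence the operational form with $D$ also agrees with the crisp relation at the binary endpoints.
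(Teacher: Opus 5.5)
Your proposal is correct and follows the paper's own argument. Both directions go by induction on derivations: completeness uses $\min(1,1)=1$, and soundness uses the fact that $\min(v_1,v_2)=1$ forces both premises to value $1$. Your formulation $S\vdash M \iff \mu_S\vdash_\mu M=1$, with Step 1 keeping all derived values in $\{0,1\}$, states precisely what the paper writes loosely as $K_{crisp}\vdash M \iff \mu_K(M)=1$. Your soundness induction directly on $\vdash_\mu$-derivations is also cleaner than the paper's appeal to derivations in the guarded-rule transition system.
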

\begin{proof}

Let $K_{crisp} = \{M \in \mathcal{M} \mid \mu_K(M) = 1\}$ be the crisp attacker knowledge extracted from the current concrete state $q$. We must prove that $K_{crisp} \vdash M \iff \mu_K(M) = 1$. 

\textbf{($\implies$ direction)}: We proceed by structural induction on the depth of the crisp derivation tree for $K_{crisp} \vdash M$.
\begin{itemize}
    \item \textbf{Base case (Axiom):} If $M$ is derived via the Axiom rule, $M \in K_{crisp}$. By definition, $\mu_K(M) = 1$.
    \item \textbf{Inductive step (Symmetric Encryption/Decryption):} For \textit{senc}, the premises $K_{crisp} \vdash M'$ and $K_{crisp} \vdash k$ imply $\mu_K(M') = \mu_K(k) = 1$, hence
    \begin{equation*}
        \mu_K(senc(M', k)) = \min(1, 1) = 1
    \end{equation*}
    For \textit{sdec}, the premises $K_{crisp} \vdash senc(M, k)$ and $K_{crisp} \vdash k$ imply $\mu_K(M) = \min(1, 1) = 1$.
    \item \textbf{Inductive step (Public-Key Encryption/Decryption):} For \textit{penc}, the premises $K_{crisp} \vdash M'$ and $K_{crisp} \vdash pk_A$ imply
    \begin{equation*}
        \mu_K(penc(M', pk_A)) = \min(1, 1) = 1
    \end{equation*}
    For \textit{pdec}, the premises $K_{crisp} \vdash penc(M, pk_A)$ and $K_{crisp} \vdash sk_A$ imply $\mu_K(M) = \min(1, 1) = 1$.
    \item \textbf{Composition and Decomposition} follow identically via the $\min$ operator.
\end{itemize}

\textbf{($\impliedby$ direction)}: We proceed by structural induction on derivations of $M$ in the guarded-rule transition system. Because $\mu_K(M)$ can evaluate to $1$ only when $M$ is in the initial attacker knowledge or produced by the sup-min operator with all premises at $1$, each such step corresponds to a valid crisp rule application in $\vdash$. Hence any term with $\mu_K(M) = 1$ is strictly derivable in the crisp model.
\end{proof}

\begin{theorem}[Monotone Knowledge-Quality Evolution]
Under the Product T-norm \(T(a,b)=a \cdot b\), finite discretization \(D\), and any fixed threshold \(\alpha\in(0,1]\), the effective attacker support
\[
\mathrm{Supp}_t^\alpha=\{M\in\mathcal{M}\mid \mu_t(M)\ge\alpha\}
\]
is non-increasing under leakage updates. Consequently, Hartley non-specificity
\[
H_t=\log_2 \left|\mathrm{Supp}_t^\alpha\right|
\]
is monotonically non-increasing~\cite{klir1995fuzzy,hartley1928,dubois1980fuzzy}.
\end{theorem}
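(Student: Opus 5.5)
The argument is a pointwise monotonicity step followed by a set-inclusion step, and it uses only the leakage update law
\[
\mu_{t+1}(M)=D\big(T_{\Pi}(\mu_t(M),\mu_{L_t}(M))\big)
\]
together with the properties of \(D\) fixed in the definition of \(\mathcal{T}_{FuzzyDY}\): \(D\) is monotone, \(D(0)=0\), \(D(1)=1\), and \(D\) maps into the grid \(\{0,\delta,\dots,1\}\). I read the statement as concerning a single leakage transition \(q_t\rightarrow_F q_{t+1}\). Steps \(t\to t+1\) that are not leakage updates (for example derivation steps or pure protocol moves) are outside the scope of the claim, and I will say so explicitly.

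\textbf{Step 1 (pointwise contraction).} Since \(\mu_t(M),\mu_{L_t}(M)\in[0,1]\), we have \(\mu_t(M)\cdot\mu_{L_t}(M)\le\mu_t(M)\). By monotonicity of \(D\),
\[
\mu_{t+1}(M)\le D(\mu_t(M)).
\]
We must also have \(D(\mu_t(M))=\mu_t(M)\). This holds because stored states already lie on the grid \(G\), and I will use the fact that a discretizer fixes grid points. This fact is not literally among the stated axioms. I therefore plan to state it as part of what ``cell-to-cell projection onto \(G\)'' means, or else to replace it with the weaker conservative assumption \(D(x)\le x\) on \([0,1]\). Either choice yields \(\mu_{t+1}(M)\le\mu_t(M)\) for every \(M\).

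\textbf{Step 2 (support inclusion and Hartley).} Fix \(\alpha\in(0,1]\). If \(M\in\mathrm{Supp}_{t+1}^\alpha\), then \(\alpha\le\mu_{t+1}(M)\le\mu_t(M)\), so \(M\in\mathrm{Supp}_t^\alpha\). Hence \(\mathrm{Supp}_{t+1}^\alpha\subseteq\mathrm{Supp}_t^\alpha\), and since \(\mathcal{M}\) is finite under bounded sessions, the cardinalities are non-increasing. Because \(\log_2\) is monotone, \(H_{t+1}\le H_t\) whenever the support is nonempty. For the degenerate case of an empty support I adopt the convention \(\log_2 0=-\infty\), or equivalently \(\log_2\max(1,|\cdot|)\), so that monotonicity persists. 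Induction on \(t\) over any sequence of leakage updates then gives the claim along whole traces.

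\textbf{Expected obstacle.} The only delicate point is the discretizer. A monotone \(D\) that rounds up, for instance to the nearest grid point above, could in principle produce \(D(a b)>a\) if \(a\) were off-grid. So the step I expect to need care is justifying \(D(\mu_t(M))=\mu_t(M)\), or conservativity \(D(x)\le x\). I will handle it by invariance: initial maps are grid-valued, every update applies \(D\), and so all reachable \(\mu_t\) are grid-valued. On grid points a projection is the identity, which closes the gap. Strict decrease of \(H_t\) is not claimed, because \(\mu_{L_t}(M)=1\) or coarse rounding can leave the support unchanged.
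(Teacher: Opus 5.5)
Your proposal is correct and follows the paper's proof essentially step for step: the pointwise bound \(\mu_t(M)\,\mu_{L}(M)\le\mu_t(M)\), then monotonicity of \(D\) together with \(D(\mu_t(M))=\mu_t(M)\) for already-discretised values, then \(\mathrm{Supp}_{t+1}^\alpha\subseteq\mathrm{Supp}_t^\alpha\), and finally monotonicity of \(\log_2\). You also make two things explicit that the paper uses without comment: that \(D\) fixes grid points (or that \(D(x)\le x\)) is not among the stated axioms on \(D\), and that the empty-support case needs a convention.
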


\begin{proof}
For any term \(M\), \(\mu_{t+1}(M)=D(\mu_t(M)\cdot \mu_L(M))\), with \(\mu_L(M)\in[0,1]\).
The key step follows standard T-norm axioms: monotonicity and boundary (\(T(a,1)=a\)). Because \(\mu_L(M)\le 1\), monotonicity gives
\[
T(\mu_t(M),\mu_L(M))\le T(\mu_t(M),1)=\mu_t(M)
\]
For Product specifically, \(T(a,b)=a \cdot b\), so the inequality is immediate and is strict whenever \(0<a<1\) and \(0\le b<1\)~\cite{klir1995fuzzy,dubois1980fuzzy}.
Monotonicity of \(D\) yields \(\mu_{t+1}(M)\le D(\mu_t(M))\), and because state values are already discretised, \(D(\mu_t(M))=\mu_t(M)\). Hence \(\mu_{t+1}(M)\le\mu_t(M)\).

If \(x\in \mathrm{Supp}_{t+1}^\alpha\), then \(\mu_{t+1}(x)\ge\alpha\), so \(\mu_t(x)\ge\alpha\), thus \(x\in \mathrm{Supp}_{t}^\alpha\). Therefore \(\mathrm{Supp}_{t+1}^\alpha\subseteq \mathrm{Supp}_{t}^\alpha\), implying \(\left|\mathrm{Supp}_{t+1}^\alpha\right|\le \left|\mathrm{Supp}_{t}^\alpha\right|\). Since \(\log_2(\cdot)\) is monotone, \(H_{t+1}\le H_t\).
\end{proof}

\begin{remark}
Conceptually, the monotonic descent of the Hartley non-specificity ($H_{t+1} \le H_t$) mathematically guarantees that repeated side-channel observations strictly degrade the adversary's epistemic uncertainty. This assures that the accumulation of graded evidence irreversibly drives the execution state toward the vulnerability threshold, accurately mirroring the candidate-elimination dynamics of real-world physical cryptanalysis.
\end{remark}

\begin{theorem}[Termination of Graded Verification]
For bounded protocol sessions and a finite discretization grid
\[
G=\{0,\delta,2\delta,\ldots,1\},
\]
the graded verification algorithm terminates in finite time~\cite{dill1996murphi}.
\end{theorem}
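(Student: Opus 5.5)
The plan is to show that the reachable state space of $\mathcal{T}_{FuzzyDY}$ is finite. Termination then follows from the standard termination argument for explicit-state enumeration in Murphi~\cite{dill1996murphi}. A worklist/hash-table search over a finite graph visits each state at most once, and each state has finitely many successors, one per enabled guarded rule instance in $\Delta$.

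\textbf{Step 1: the crisp component is finite.} Under bounded sessions the term algebra $\mathcal{M}$ is finite, as stated in the definition of $\mathcal{T}_{DY}$. The agent instances and honest-thread states $th$ are finitely enumerated Murphi records. The network slots and intruder buffers have bounded capacity. Hence the non-fuzzy part of every state ranges over a finite set $C$. One caveat is the trace component $tr$: in the Murphi encoding it is not an unbounded history but is reflected in bounded state variables. Alternatively, it is bounded because each session has finitely many protocol steps.

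\textbf{Step 2: the fuzzy component is finite.} Every stored membership value lies in $G$, since each derivation and each leak update applies $D$ before storage, and $D$ maps into $G$. The map $\mu_{K_q}:\mathcal{M}\to G$ therefore ranges over the finite set $G^{\mathcal{M}}$, of size $(1/\delta+1)^{|\mathcal{M}|}$. So $Q_{FuzzyDY}\subseteq C\times G^{\mathcal{M}}$ is finite.

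\textbf{Step 3: each step is computable in finite time.} Every rule firing must itself terminate. The sup-min closure ranges over finitely many decompositions $z=f(x,y)$ in the finite $\mathcal{M}$, so the sup is a max over a finite set. Graded deduction to a fixpoint terminates because the knowledge map can only increase within the finite lattice $G^{\mathcal{M}}$, giving a bounded number of strict increases. The T-norm update, $D$, the $\alpha$-cut and the Hartley computations are finite arithmetic. The set $\mathcal{L}$ of leak observations must be finite, or at least finitely discretised, to keep branching finite. If it is not assumed, I would restrict to $\mathcal{L}\subseteq G^{\mathcal{M}}$.

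\textbf{Main obstacle.} The main obstacle is making Step 2 airtight. We must ensure that no intermediate unrounded real is ever stored, and that leak updates cannot cycle forever without a state change. Leak cycles are harmless because visited-state hashing stops repetition. The Monotone Knowledge-Quality Evolution theorem additionally gives that leak-only paths are non-increasing in each coordinate, so they stabilise after at most $|\mathcal{M}|/\delta$ strict decreases. This gives an explicit bound independent of hashing.
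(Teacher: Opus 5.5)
Your proposal is correct and follows the same route as the paper. Both embed $Q_{FuzzyDY}$ into a finite product of a finite non-fuzzy component with grid-valued knowledge maps ($G^{N_K}$, $N_K\le|\mathcal{M}|$) and then appeal to termination of Murphi's explicit-state search over a finite graph; your additions (per-step computability, finite branching, and bounding the control part by all type-valid valuations $C$ rather than by the paper's \emph{reachable} crisp space $Q_{DY}$, which read literally cannot contain the control parts of the leak-enabled violating states of Theorem~\ref{thm:separation}) tighten that argument rather than change it.
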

\begin{proof}

Let \(Q_{DY}\) denote the reachable state space of the bounded crisp model; under bounded sessions, \(Q_{DY}\) is finite. In the graded system, each relevant attacker-knowledge component is annotated by a membership value in \([0,1]\), and each transition applies fuzzy arithmetic followed by discretization \(D:[0,1]\rightarrow G\). Equivalently, execution uses a finite cell-to-cell mapping over \([0,1]\): each continuous update is projected back into one of finitely many cells before storage. Hence, every stored membership value lies in finite set \(G\), where \(|G|=\lfloor 1/\delta\rfloor+1\).

Let $N_K$ be the maximum size of the attacker knowledge base, bounded by the finite term algebra $\mathcal{M}$ generated by the bounded sessions and roles (hence $N_K \le |\mathcal{M}|$). Then the graded state space embeds into
\begin{equation}
    Q_{FuzzyDY} \subseteq Q_{DY} \times G^{N_K}
\end{equation}

Because the underlying protocol control states are finite (bounded by $Q_{DY}$), and the graded knowledge maps to a strictly finite grid, the resulting reachable graded state space $Q_{FuzzyDY}$ is strictly bounded~\cite{dill1996murphi}. Explicit-state exploration over this finite bounded space $Q_{FuzzyDY}$ therefore guarantees termination in finite time.
\end{proof}

\begin{remark}[Bounded Exhaustiveness and Precision]
General verification of cryptographic protocols is undecidable for unbounded sessions~\cite{blanchet2012symbolic}. Under the bounded-session and finite-discretization assumptions in this paper, the method provides bounded exhaustiveness: the explicit-state Murphi backend explores the complete finite reachability graph induced by the model, which is strictly bounded by the finite state space $Q_{FuzzyDY}$. Consequently, within this bounded abstraction, reported violations are concrete reachable executions (protocol steps plus fuzzy leak updates), not purely symbolic artifacts; this is evidenced by the executed failing traces in Table~\ref{tab:attack_trace}.
\end{remark}

\begin{proposition}[Hartley Threshold Semantics]
Let \(W_t^\alpha=\left|\pi_\alpha(\mu_{K_t})\right|\), i.e., the cardinality of the effective \(\alpha\)-cut support of fuzzy attacker knowledge at execution step \(t\). A 50\% reduction in support width across \(t\rightarrow t+1\) implies exactly a 1-bit Hartley drop~\cite{hartley1928}:
\[
    H_t = \log_2 W_t^\alpha, \quad W_{t+1}^\alpha = \frac{1}{2} W_t^\alpha \implies H_{t+1} = H_t - 1
\]
\end{proposition}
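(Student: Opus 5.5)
The plan is to read the claim as a direct consequence of the definition \(H_t=\log_2 W_t^\alpha\), using the functional equation of the logarithm. The first step fixes the domain: I will assume \(W_t^\alpha\ge 2\) and that \(W_t^\alpha\) is even. These are the conditions under which \(W_{t+1}^\alpha=\tfrac12 W_t^\alpha\) is a positive integer, and so a legitimate cardinality of the finite set \(\pi_\alpha(\mu_{K_{t+1}})\). Both logarithms are then finite and well defined. In particular this rules out the degenerate case \(W_t^\alpha\in\{0,1\}\), where \(\log_2 0\) is undefined or the halved width would not be an integer. I will record this explicitly, since the statement leaves it implicit.

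Next, I substitute the hypothesis into the definition:
\[
H_{t+1}=\log_2 W_{t+1}^\alpha=\log_2\!\Big(\tfrac12 W_t^\alpha\Big)=\log_2 W_t^\alpha+\log_2 2^{-1}=H_t-1,
\]
using \(\log_2(xy)=\log_2 x+\log_2 y\) for \(x,y>0\). This gives the equality exactly, not merely as a bound. I will also note that it agrees with the Monotone Knowledge-Quality Evolution theorem. That theorem guarantees \(\mathrm{Supp}_{t+1}^\alpha\subseteq\mathrm{Supp}_t^\alpha\) under Product T-norm leak updates, so a halving is one admissible instance of the non-increase. The proposition then sharpens the generic \(H_{t+1}\le H_t\) to a quantified drop of exactly one bit.

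I expect the main obstacle to be interpretive rather than technical. The definitions of \(W^\alpha\) and \(H\) must refer to the same set: the paper writes both \(|\pi_\alpha(\mu_K)|\) and \(|\mathrm{Supp}^\alpha_t|\), and these coincide by the \(\alpha\)-cut Projection definition. The argument also has to be well posed on the discretised grid \(G\). I would handle this by stating once that \(\pi_\alpha(\mu_{K_t})=\mathrm{Supp}_t^\alpha\) by definition, so nothing depends on how \(D\) rounds. Finally, I will remark that the argument generalises: a reduction by a factor \(2^{-k}\) gives a \(k\)-bit drop.
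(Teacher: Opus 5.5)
Your argument is correct and is essentially the paper's own justification: the paper gives no separate proof and simply treats the claim, already asserted in the Leak Accumulation subsection, as the immediate identity $\log_2(\tfrac12 W_t^\alpha)=\log_2 W_t^\alpha-1$. Your domain caveat is a harmless refinement, though evenness of $W_t^\alpha$ is already forced by the hypothesis (since $W_{t+1}^\alpha$ is a cardinality), so the only case that actually needs excluding is $W_t^\alpha=0$, where $H_t$ is undefined.
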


\begin{definition}[Fuzzy Relation Equations (FRE) / Concept-Lattice Context]
To systematically identify and prune redundant state variables, protocol dependencies are encoded as a Fuzzy Relation Equation (FRE) formal context $(U, V, R, \lambda)$. Here, the set of objects $U$ represents the transition constraints and rules (including the target invariant $\Phi$), the set of attributes $V$ represents the explicit protocol state variables, and the fuzzy incidence relation $R: U \times V \rightarrow [0,1]$ encodes the mathematical dependency strength between constraints and variables~\cite{wille1982}.
\end{definition}

By representing the protocol state space as a formal context, we can systematically identify the minimum subset of state variables required to preserve the system's execution semantics. In concept lattice theory, this bounding mechanism is formalised as an exact attribute reduct (E-reduct)~\cite{lobo2023fre}.

\begin{definition}[Exact Attribute Reduct (E-reduct)]
By representing the protocol state space as a formal context, we can systematically identify the minimum subset of state variables required to preserve the system's execution semantics. In concept lattice theory, this bounding mechanism is formalised as an E-reduct~\cite{lobo2023fre}.

Given an encoded formal context $(U, V, R, \lambda)$, a subset of state variables $V' \subseteq V$ is an \textit{E-consistent set} if the concept lattice induced by the reduced context $(U, V', R_{V'}, \lambda_{V'})$ is extent-isomorphic (E-isomorphic) to the concept lattice of the original full context. An \textit{E-reduct} is defined as a strictly minimal E-consistent set; that is, for any variable $v \in V'$, the further removal of $v$ breaks the E-isomorphism.

Operationally, an E-reduct isolates the minimal state dimensionality required to perfectly preserve the distinguishability of the protocol's transition constraints. Because the lattice extents (which group the constraints $U$) remain structurally identical, we can safely prune the redundant variables $V \setminus V'$ without altering the truth valuation of the encoded security invariants.
\end{definition}

\begin{remark}[Scope of Verification Guarantees]
Guarantees in this paper hold under explicit-state symbolic assumptions: (i) bounded sessions and finite term algebra, (ii) finite discretization and \texttt{real(m,n)} precision, and (iii) exploration of the induced finite abstraction. The framework quantifies epistemic knowledge quality within the symbolic model; it does not imply computational indistinguishability guarantees (e.g., indistinguishability under chosen-plaintext attack, IND-CPA) against probabilistic polynomial-time adversaries.
In particular, this lane differs from equivalence-based symbolic analyses and game-based computational proofs: the approach verifies bounded symbolic reachability with graded leakage semantics, rather than proving full equivalence properties or negligible attack probability in unbounded probabilistic settings~\cite{blanchet2012symbolic, blanchet2006cryptoverif}.
\end{remark}

\begin{lemma}[\(\alpha\)-Cut Monotonicity of Fuzzy Deduction]
\label{lemma:monotonicity}
Let \(\DedF\) be one-step fuzzy deduction under the graded rules (composition, decomposition, encryption and decryption), and let \(\DedC\) be one-step crisp deduction on sets of terms. Fix \(\alpha\in(0,1]\). Assume \(D\) is monotone and \(\alpha\)-compatible (\(x\ge\alpha\Rightarrow D(x)\ge\alpha\)). Then
\[
\pi_\alpha(\DedF(\mu_K)) \supseteq \DedC(\pi_\alpha(\mu_K))
\]
\end{lemma}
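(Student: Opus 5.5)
We argue by direct element chasing: fix an arbitrary term \(M\in\DedC(\pi_\alpha(\mu_K))\) and show \(\DedF(\mu_K)(M)\ge\alpha\), i.e.\ \(M\in\pi_\alpha(\DedF(\mu_K))\). We first make the one-step operators explicit. \(\DedC(S)\) is \(S\) together with all conclusions of a single crisp rule (Axiom, Composition, Decomposition, Symmetric/Public-Key Encryption and Decryption) whose premises lie in \(S\). \(\DedF(\mu_K)\) is the map
\[
M\mapsto D\Big(\max\big(\mu_K(M),\ \sup_{r}\min(\text{premise degrees of } r)\big)\Big),
\]
where the supremum ranges over the single graded rules \(r\) with conclusion \(M\). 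This is the extension-principle form displayed before the definition of \(\pi_\alpha\), together with the Axiom rule of \(\vdash_\mu\). Since the term algebra is finite under bounded sessions, the supremum is a maximum over finitely many rule instances.

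We then case-split on the crisp rule that places \(M\) in \(\DedC(\pi_\alpha(\mu_K))\).

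\emph{Axiom case.} Here \(M\in\pi_\alpha(\mu_K)\), so \(\mu_K(M)\ge\alpha\). The pre-discretization value is therefore at least \(\alpha\), and \(\alpha\)-compatibility of \(D\) gives \(\DedF(\mu_K)(M)\ge\alpha\).

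\emph{Binary-rule cases.} These are composition \((M_1,M_2)\), \(\textsf{senc}(M',k)\), \(\textsf{penc}(M',pk_A)\), and the decryptions from \(\textsf{senc}(M,k),k\) or \(\textsf{penc}(M,pk_A),sk_A\). Both premises \(P_1,P_2\) lie in \(\pi_\alpha(\mu_K)\), so \(\mu_K(P_i)\ge\alpha\). The corresponding graded rule (Comp., Sym-Enc, Pub-Enc, Sym-Dec, Pub-Dec) yields the candidate value \(\min(\mu_K(P_1),\mu_K(P_2))\ge\alpha\). The supremum is at least this value, and \(\alpha\)-compatibility again gives the claim.

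\emph{Decomposition.} The single premise \((M_1,M_2)\) has degree \(\ge\alpha\), and the graded Decomp. rule passes the degree through unchanged.

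\emph{Equivalence.} The graded rule copies the degree \(v\ge\alpha\) from \(M\) to \(M'\).

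Monotonicity of \(D\) is what makes the step "supremum \(\ge\) a witness" survive discretization: \(D(\sup)\ge D(\text{witness})\). Combined with \(\alpha\)-compatibility, this gives \(D(\text{witness})\ge\alpha\). In fact \(\alpha\)-compatibility applied directly to the supremum already suffices; monotonicity only matters if \(D\) is applied per rule before taking the maximum.

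The main obstacle is not the arithmetic but fixing the precise semantics of the one-step operator \(\DedF\). The paper defines \(\vdash_\mu\) as a relation \(\mu_S\vdash_\mu M=v\) that can assign several values to the same \(M\). We resolve this by taking \(\DedF(\mu_K)(M)\) to be the supremum of all one-step derivable values, including the Axiom value \(\mu_K(M)\), and then applying \(D\). This matches the sup-min extension principle and makes \(\DedF\) inflationary. If instead the graded update applied \(D\) per rule, or omitted the Axiom inclusion, we would note that the monotonicity hypothesis on \(D\) handles the former, and that the Axiom case of \(\DedC\) then requires retaining \(\mu_K(M)\), which we state explicitly as part of the definition.
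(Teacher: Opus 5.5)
Your proposal is correct and takes essentially the same route as the paper's proof: take \(t\in\DedC(\pi_\alpha(\mu_K))\), note that the premises of the crisp rule instance lie in \(\pi_\alpha(\mu_K)\), so the sup-min value is at least their minimum, which is \(\ge\alpha\), and finish with \(\alpha\)-compatibility of \(D\); your explicit treatment of the Axiom case (keeping \(\mu_K(M)\) inside \(\DedF\)) and of Equivalence makes precise what the paper folds into ``rule-specific simplifications.'' One small correction to a side remark: monotonicity of \(D\) is not needed even if \(D\) is applied per rule, since \(\max_r D(v_r)\ge D(v_{r^*})\ge\alpha\) follows from \(\alpha\)-compatibility alone.
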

\begin{proof}
Take any \(t\in \DedC(\pi_\alpha(\mu_K))\). By structural induction on the crisp rule instance deriving \(t\) from the current attacker knowledge state, there exists a premise set \(P\subseteq \pi_\alpha(\mu_K)\) with \(\mu_K(p)\ge\alpha\) for all \(p\in P\). By Zadeh's Extension Principle, constructor lifting in the graded model has the form
\[
\mu_{new}(t)=D\!\left(\sup_{t=f(P)}\min_{p\in P}\mu_K(p)\right),
\]
with rule-specific simplifications (e.g., decomposition as copy). Since \(\min_{p\in P}\mu_K(p)\ge\alpha\), the pre-discretization value is at least \(\alpha\). By \(\alpha\)-compatibility, \(D(x)\ge\alpha\) whenever \(x\ge\alpha\). Hence \(\mu_{new}(t)\ge\alpha\), i.e., \(t\in\pi_\alpha(\DedF(\mu_K))\).
\end{proof}

\begin{theorem}[Conditional Projection Soundness]
\label{thm:projection_soundness}
Let $\mathcal{T}_\alpha$ be the $\alpha$-projected crisp system and $\mathcal{T}_{FuzzyDY}$ the fuzzy system. Assume:
\begin{enumerate}
    \item \textit{Initial-state embedding:} for every crisp initial state $q_0$, there exists a fuzzy initial state $q_{F,0}$ with $K_{q_0} \subseteq \pi_\alpha(\mu_{q_{F,0}})$;
    \item \textit{Step-wise simulation premise:} each crisp protocol step has a corresponding fuzzy step over the same control action;
    \item \textit{Deduction monotonicity:} from Lemma~\ref{lemma:monotonicity};
    \item \textit{Leakage-step compatibility:} if $\mathcal{T}_\alpha$ includes projected leakage transitions, each crisp leakage step $q \rightarrow q'$ has a corresponding fuzzy leak step $q_F \rightarrow_F q_F'$ such that $K_{q'} \subseteq \pi_\alpha(\mu_{q_F'}).$ (For standard DY projections with no explicit leakage transitions, this premise is vacuous.)
\end{enumerate}

Then for every crisp transition $q \rightarrow q'$ in $\mathcal{T}_\alpha$, there exists a fuzzy transition $q_F \rightarrow_F q_F'$ such that $K_q \subseteq \pi_\alpha(\mu_{q_F})$ and $K_{q'} \subseteq \pi_\alpha(\mu_{q_F'}).$

Consequently, every crisp run in $\mathcal{T}_\alpha$ is simulated by a graded-model run, and if the graded model is safe at threshold $\alpha$, then $\mathcal{T}_\alpha$ is safe.
\end{theorem}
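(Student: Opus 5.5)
The plan is a standard forward-simulation argument. I will define a simulation relation
\[
\mathcal{R}=\{(q,q_F)\mid \text{$q$ and $q_F$ share control state (trace and honest-thread components) and } K_q\subseteq\pi_\alpha(\mu_{q_F})\}.
\]
The goal is to show that $\mathcal{R}$ contains the initial pairs and is preserved by every crisp transition of $\mathcal{T}_\alpha$. The one-step claim of the theorem is then the preservation step, read at any related pair $(q,q_F)$. For the base case I use assumption~(1): each crisp initial state $q_0$ has a fuzzy initial $q_{F,0}$ with $K_{q_0}\subseteq\pi_\alpha(\mu_{q_{F,0}})$. I read the "same control state" clause of $\mathcal{R}$ into this embedding, since the two systems share the guarded-command skeleton $A=(Q,Q_0,\Delta)$.

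For preservation, take a crisp step $q\to q'$ and a related $q_F$, and split into three cases according to the rule fired.

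\textbf{Protocol/network steps} (intercept, inject, honest-thread moves). Assumption~(2) gives a fuzzy step $q_F\to_F q_F'$ over the same control action. I must check two things.

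First, the guard is enabled in $q_F$. An inject guard only requires that the attacker possess the injected term. Because $K_q\subseteq\pi_\alpha(\mu_{q_F})$, this holds at level $\alpha$, which is what the projected system uses.

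Second, the resulting crisp knowledge still embeds. It is $K_q$ plus the intercepted terms, and the fuzzy step adds the same terms with degree $1$ (or at least $\alpha$), using $D(1)=1$.

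\textbf{Derivation steps.} Here $K_{q'}=K_q\cup\DedC(K_q)$ or a subset of it. Monotonicity of $\DedC$ in its argument gives $\DedC(K_q)\subseteq\DedC(\pi_\alpha(\mu_{q_F}))$. Lemma~\ref{lemma:monotonicity}, invoked via assumption~(3), then gives $\DedC(\pi_\alpha(\mu_{q_F}))\subseteq\pi_\alpha(\DedF(\mu_{q_F}))$. So the fuzzy derive step reaching $q_F'$ with $\mu_{q_F'}=\DedF(\mu_{q_F})$ (joined with the old map) satisfies $K_{q'}\subseteq\pi_\alpha(\mu_{q_F'})$. Monotonicity of the crisp closure is standard, since the DY rules are positive. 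I also need that the fuzzy derive step never lowers existing memberships, which holds because the fuzzy derive step only raises memberships via a sup/join with the existing map.

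\textbf{Leakage steps.} These are handled directly by assumption~(4), and the case is vacuous when $\mathcal{T}_\alpha$ has no leakage transitions.

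Given preservation, induction on run length turns every crisp run $q_0\to\cdots\to q_n$ into a fuzzy run $q_{F,0}\to_F\cdots\to_F q_{F,n}$ with $(q_i,q_{F,i})\in\mathcal{R}$.

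For the safety consequence I argue by contraposition. Suppose $\mathcal{T}_\alpha$ reaches a violating state $q_n$. For $\Phi_{\text{sec}}$, this means $M_{sec}\in K_{q_n}$, hence $\mu_{q_{F,n}}(M_{sec})\ge\alpha$, so $\neg\Phi^{fuzzy}_{\text{sec}}(q_{F,n})$. For $\Phi_{\text{auth}}$, the Commit/Running predicates depend only on the control state, which is shared, so the same correspondence failure occurs in $q_{F,n}$. Either way the graded model is unsafe at $\alpha$.

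The main obstacle is the protocol-step case. Assumption~(2) as stated only asserts a corresponding fuzzy step "over the same control action"; it does not explicitly say the fuzzy step preserves the knowledge embedding or that the guards match. I will make this precise by reading~(2) as requiring the same control successor and the same crisp message effect recorded at degree at least $\alpha$. I will then note that, without this strengthening, the claim $K_{q'}\subseteq\pi_\alpha(\mu_{q_F'})$ for protocol steps does not follow from the stated premises alone. The second-order worry is that a fuzzy leakage update could drop a term below $\alpha$ that the crisp side still holds. This is why leak steps must be matched only through assumption~(4), and not through~(2) or~(3).
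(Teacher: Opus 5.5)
Your proposal is correct and follows essentially the same route as the paper. It uses the same simulation relation ($K_q\subseteq\pi_\alpha(\mu_{q_F})$ with aligned control components), the base case from assumption~(1), the three-way split into protocol, deduction and leakage steps handled by assumption~(2), Lemma~\ref{lemma:monotonicity} and assumption~(4), induction on run length, and contraposition for safety, and your extra care (reading assumption~(2) as also carrying the message effect at degree at least $\alpha$, bridging $\DedC(K_q)\subseteq\DedC(\pi_\alpha(\mu_{q_F}))$ by monotonicity before applying the lemma, and joining with the old map in the derive step) spells out steps the paper's proof asserts without justification.
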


\begin{proof} Define the simulation relation $R_\alpha(q, q_F) \iff K_q \subseteq \pi_\alpha(\mu_{q_F})$ and the protocol-control components of $q$ and $q_F$ are aligned.

\textit{Base case.} By Assumption 1, every crisp initial state has a related fuzzy initial state, so $R_\alpha(q_0, q_{F,0})$ holds.

\textit{Step case.} Assume $R_\alpha(q, q_F)$ and a crisp transition $q \rightarrow q'$. 
\begin{itemize}
    \item \textit{Case A (protocol control):} by Assumption 2, $q_F \rightarrow_F q_F'$ exists with aligned control components, so the relation is preserved.
    \item \textit{Case B (attacker deduction):} apply Lemma~\ref{lemma:monotonicity}, $\pi_\alpha(\DedF(\mu_{q_F})) \supseteq \DedC(\pi_\alpha(\mu_{q_F}))$, guaranteeing that every newly derived crisp term at level $\alpha$ is strictly subsumed by the projected fuzzy successor.
    \item \textit{Case C (projected leakage step, if present):} by Assumption 4, there is a matching fuzzy leak step with $K_{q'} \subseteq \pi_\alpha(\mu_{q_F'}).$ For the standard projected crisp system with no explicit leakage transition, this is a stuttering refinement step $(q' = q)$: crisp knowledge is unchanged while the fuzzy state sharpens $(\mu_{q_F} \rightarrow_F \mu_{q_F'}).$
\end{itemize}

Hence $R_\alpha(q', q_F')$ holds in all transition cases. By induction on run length, each crisp run has a simulating fuzzy run. So any crisp violation would be represented in the graded model. Contrapositively, if no graded violation exists at threshold $\alpha$, no violation exists in $\mathcal{T}_\alpha$.
\end{proof}

\begin{corollary}[Cut-Soundness]
If the graded framework \(\ModelFuzzy\) satisfies invariant $\Phi$ at threshold $\alpha$, then no classical crisp attacker restricted to the projected knowledge base $\pi_\alpha(\mu_K)$ can violate $\Phi$.
\end{corollary}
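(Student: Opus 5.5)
The corollary follows from Theorem~\ref{thm:projection_soundness} by contraposition. I will read ``a classical crisp attacker restricted to the projected knowledge base $\pi_\alpha(\mu_K)$'' as a run of the projected crisp system $\mathcal{T}_\alpha$. The attacker starts from knowledge $K_{q_0}\subseteq\pi_\alpha(\mu_{q_{F,0}})$ and closes under $\DedC$. Its leakage and protocol steps are the ones that $\mathcal{T}_\alpha$ inherits from the graded model. This makes premises~1, 2 and~4 of Theorem~\ref{thm:projection_soundness} hold by construction, and premise~3 is exactly Lemma~\ref{lemma:monotonicity}, provided $D$ is monotone and $\alpha$-compatible. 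I will state that requirement explicitly as a standing hypothesis of the corollary.

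First, I suppose for contradiction that some crisp run $q_0\to q_1\to\dots\to q_n$ of $\mathcal{T}_\alpha$ reaches a state with $\neg\Phi(q_n)$. By Theorem~\ref{thm:projection_soundness}, induction on $n$ gives a fuzzy run $q_{F,0}\to_F\dots\to_F q_{F,n}$ with $R_\alpha(q_i,q_{F,i})$ at every step. That is, the control components are aligned and $K_{q_i}\subseteq\pi_\alpha(\mu_{q_{F,i}})$.

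Second, I transfer the violation, and I split on the invariant. For confidentiality, $\neg\Phi^{crisp}_{\text{sec}}(q_n)$ means $K_{q_n}\vdash M_{sec}$. Lemma~\ref{lemma:monotonicity}, iterated over the finitely many deduction steps in that derivation, gives $M_{sec}\in\pi_\alpha(\mu_{q_{F,n}})$. Hence $\mu_{K}(M_{sec})\ge\alpha$, which is $\neg\Phi^{fuzzy}_{\text{sec}}(q_{F,n})$. For authentication, $\textsf{Commit}$ and $\textsf{Running}$ are control predicates, and aligned control gives $\textsf{Commit}(q_{F,n},B,A,M)\wedge\neg\textsf{Running}(q_{F,n},A,B,M)$. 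The forged term used by the crisp attacker lies in $K_{q_n}\subseteq\pi_\alpha$, which is consistent with $\mu_{K}(M_{forge})\ge\alpha$. Either way the graded model violates $\Phi$ at threshold $\alpha$, contradicting the hypothesis.

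The main obstacle is the second step. The theorem's conclusion relates only knowledge inclusions and control alignment, and it does not by itself say that $\Phi$ is preserved along $R_\alpha$. I will therefore add and verify the lemma that every $\Phi$ considered is \emph{upward-closed along $R_\alpha$}: if $R_\alpha(q,q_F)$ and $\neg\Phi(q)$, then $\neg\Phi(q_F)$ at level $\alpha$. This holds because the secrecy clause is monotone in knowledge, with $\alpha$-cuts giving the translation, and the correspondence clause depends only on control state. A second, smaller point is that iterating Lemma~\ref{lemma:monotonicity} requires the fuzzy deduction closure to be reachable as fuzzy transitions. Case~B of Theorem~\ref{thm:projection_soundness} already supplies this, one deduction step at a time.
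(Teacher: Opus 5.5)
Your proposal is correct and follows the same route as the paper, which proves the corollary in one line as the contrapositive of the run simulation in Theorem~\ref{thm:projection_soundness}. Your additions supply the step the paper only asserts when it says any crisp violation ``would be represented in the graded model'': you make $\alpha$-compatibility of $D$ an explicit hypothesis, and you prove that a crisp violation transfers along $R_\alpha$ (secrecy via $K_q\subseteq\pi_\alpha(\mu_{q_F})$ once the deduction steps are also simulated, authentication via control alignment).
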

\begin{proof}
Immediate from Theorem~\ref{thm:projection_soundness}: each projected crisp run is simulated by a graded-model run, so the absence of graded violations implies the absolute absence of projected crisp violations.
\end{proof}

\begin{proposition}[Conditional Reduct Preservation]
\label{prop:reduct_preservation}
Let protocol dependencies be encoded as a multi-adjoint formal context $(U, V, R, \lambda)$, where $U$ contains transition constraints (including the target invariant $\Phi$) and $V$ contains state attributes. Let $V' \subseteq V$ be an E-reduct of this encoded context. Eq.~\ref{eq:proj_equiv} defines state equivalence by projection strictly on the reduct attributes:
\begin{equation}  \label{eq:proj_equiv}
    q_1 \sim_{V'} q_2 \iff \forall v \in V', q_1(v) = q_2(v)
\end{equation}
Under this projection equivalence $\sim_{V'}$, verification verdicts for $\Phi$ are strictly preserved after pruning the redundant attributes $V \setminus V'$. Specifically, safety (or violation reachability) evaluated over the full model is exactly equivalent to safety (or violation reachability) in the reduced quotient model. This conditional preservation claim follows the E-reduct FRE reduction results of Lobo \textit{et al.}~\cite{lobo2023fre}, with the framework by Wille~\cite{wille1982} as the underlying lattice-theoretic basis.
\end{proposition}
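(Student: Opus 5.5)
The plan is to show that the quotient map $q \mapsto [q]_{\sim_{V'}}$ is a bisimulation between the full explicit-state system and the reduced quotient system, and that $\Phi$ is constant on $\sim_{V'}$-classes. Verdict preservation then follows by the usual transfer of reachability along bisimulations. The argument rests on one bridging hypothesis, the \emph{encoding-faithfulness premise}, which the word ``Conditional'' in the statement allows us to make explicit. It says two things:
\begin{itemize}[noitemsep,topsep=0pt]
\item for each object $u\in U$, the truth value of the guard or constraint $u$ at a state $q$ depends only on the attributes $v$ with $R(u,v)>0$;
\item each rule's update writes to and reads from only attributes incident to its constraint object.
\end{itemize}

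The first step is to use the E-reduct property. By the definition of E-reduct, the reduced context $(U,V',R_{V'},\lambda_{V'})$ has the same extents as the full context; this is the E-isomorphism from the Lobo \textit{et al.} result. I will argue that, for each removed attribute $w\in V\setminus V'$, the attribute concept of $w$ has an extent already generated by $V'$. Hence, for every constraint $u$ whose valuation involves $w$, the same set of constraints is separated by attributes in $V'$. Combined with faithfulness, this yields the key lemma: if $q_1\sim_{V'}q_2$, then every guard, and in particular $\Phi$ (which is itself an object in $U$), evaluates identically on $q_1$ and $q_2$.

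The second step is transitions. Let $q_1\sim_{V'}q_2$, and let a rule fire $q_1\to q_1'$. By the lemma, the same rule is enabled at $q_2$, giving $q_2\to q_2'$. Its effect on $V'$-attributes depends only on $V'$-attributes, so $q_1'\sim_{V'}q_2'$. Graded leakage updates are included here, because $D\circ T$ acts pointwise on each membership coordinate. Initial states map to initial classes. This gives a bisimulation, and an induction on run length then transfers violation reachability in both directions.

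The main obstacle is the first step. Extent-isomorphism of concept lattices is a statement about which groupings of constraints are definable, not directly about functional dependence of guard values on attributes, so the inference ``same extents $\Rightarrow$ removed attributes cannot change any guard'' does not follow from lattice theory alone. I would first try to close this gap by strengthening the encoding requirement. The requirement would be that $R(u,w)>0$ for a removed $w$ only when the $w$-coordinate is determined by $V'$-coordinates on reachable states, which is what E-consistency is meant to certify. The preservation claim would then be stated relative to this requirement. If that fails, I would fall back to proving only the direction in which the full model refines the quotient (every full violation induces a quotient violation), and leave the converse conditional on the premise.
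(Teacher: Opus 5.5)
\textbf{There is a genuine gap in your first step, and it cannot be closed from the stated hypotheses.} You are right to doubt that extent-isomorphism gives guard invariance. In fact the key lemma is false, already in the crisp special case.

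Let two attributes $v,w$ have identical incidence columns, $R(u,v)=R(u,w)$ for all $u\in U$. For instance, both are incident only to $u_\Phi$ and to one rule $u_1$ that executes $w:=\neg v$. Dropping $w$ changes no extent, so $V\setminus\{w\}$ is E-consistent and contains an E-reduct $V'$ with $w\notin V'$. Now take $\Phi\equiv(v=w)$. The states $(v,w)=(0,0)$ and $(0,1)$ are $\sim_{V'}$-equivalent but disagree on $\Phi$. This happens even though your first faithfulness bullet holds.

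The same problem recurs in your transition step. Your second bullet allows a rule incident to both $v\in V'$ and a pruned $w$ to execute $v:=w$. So ``its effect on $V'$-attributes depends only on $V'$-attributes'' does not follow.

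Your proposed repair is to require that a removed $w$ with $R(u,w)>0$ be a function of the $V'$-coordinates on reachable states. That does make both steps go through, but it is not what E-consistency certifies. E-consistency is a property of the matrix $R$ alone and never sees variable values. Once you assume the repair, the E-reduct hypothesis does no work: any $V'$ meeting the requirement would do. You would then be proving a derived/dead-variable elimination result, not the stated proposition.

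Your fallback direction is the generic existential-abstraction fact. It holds only if $\Phi$ is read existentially on classes, it does not use the reduct, and it is weaker than the claimed exact equivalence.

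\textbf{Comparison with the paper.} The paper's proof takes exactly the step you declined to take. From ``the reduced and full contexts preserve the same discernibility structure'' it asserts that the truth valuation of $u_\Phi$ is determined by the $V'$-projection class, hedged only by ``within the encoded abstraction''. It then concludes verdict preservation from $\Phi(q_1)=\Phi(q_2)$ with no argument that transitions respect $\sim_{V'}$.

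So your diagnosis of the weak point is correct, and your bisimulation step names a second requirement the paper omits. Class-invariance of $\Phi$ alone does not give equivalence of reachability verdicts. A defensible version of the proposition states your semantic determination premise as an explicit hypothesis covering both guards and updates. In that version the E-reduct serves only as a heuristic for choosing $V'$, not as the source of the guarantee.
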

\begin{proof}
Because $V'$ is an E-reduct, the reduced and full contexts preserve the same discernibility structure of objects (constraints), equivalently captured by concept-lattice isomorphism in the reduction framework. Thus removed attributes $V \setminus V'$ are redundant for distinguishing constraint behaviour in the encoded context.

For FRE-based encoding, this is the operational content of exact reduction: restricting to $V'$ preserves the solution structure relevant to $U$. Therefore, for each invariant-related object $u_\Phi \in U$, its truth valuation is determined by the $V'$-projection class. Consequently, if $q_1 \sim_{V'} q_2$, their invariant evaluations are mathematically identical ($\Phi(q_1) = \Phi(q_2)$) within the encoded abstraction, guaranteeing strict preservation of the verification verdict. 
\end{proof}

\begin{theorem}[Leakage-Tolerance Separation (Existence by Counter-Example)]
\label{thm:separation}
The graded framework \(\ModelFuzzy\) yields a strictly finer security classification than standard binary Dolev--Yao \(\ModelDY\): there exists a protocol model \(P\), safety invariant \(\Phi\), and leakage configuration \((\alpha, \sigma_{leak})\) such that \(\ModelDY(P) \models \Phi\) and \(\ModelFuzzy^{\alpha,\sigma_{leak}}(P) \not\models \Phi\). Equivalently, there exists a finite guarded-rule firing sequence \(r_0, \dots, r_k \in \Delta\) from an initial state \(q_0 \in Q_0\) to a reachable violating state \(q_k\) with \(\neg\Phi(q_k)\) in the graded model, while no such violating sequence exists in the crisp model for the same \(P\).
\end{theorem}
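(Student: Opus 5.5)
The statement is existential, so I will prove it by exhibiting one explicit witness $(P,\Phi,\alpha,\sigma_{leak})$ and checking both halves: crisp safety and graded violation. I will take a deliberately minimal model rather than full NSL, because every step is then checkable by hand. Let $P$ be a one-session shared-key challenge--response protocol: $B$ sends a fresh nonce $N_b$, and $A$ answers with $\textsf{senc}(N_b,k_{AB})$. The invariant is $\Phi=\Phi_{\text{auth}}^{crisp}$, the correspondence $\textsf{Commit}(q,B,A,N_b)\Rightarrow\textsf{Running}(q,A,B,N_b)$. The forged term is $M_{forge}=\textsf{senc}(N_b,k_{AB})$. The initial attacker knowledge is $\{A,B,N_b\}$ (or with $N_b$ learned by interception), and $k_{AB}$ is never sent in the clear.

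\textbf{Crisp half.} I will show that $K\not\vdash k_{AB}$ in every reachable crisp state, by induction over reachable states of $\mathcal{T}_{DY}$. The invariant I maintain is that $k_{AB}$ occurs in attacker-visible terms only in key position of $\textsf{senc}$. The only rules that expose a subterm in non-key position are decomposition and \textsf{sdec}, and \textsf{sdec} itself requires $k_{AB}$. By the equational theory, no other equalities exist, so the \textbf{(Equiv.)} rule cannot expose $k_{AB}$ either. It follows that $M_{forge}$ is never derivable. The only other way for $B$ to commit is through a genuine response from $A$, which makes $\textsf{Running}$ true. Since the session is bounded, this argument is a finite case check over the rules in $\Delta$.

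\textbf{Graded half.} I choose $\alpha=0.5$, grid step $\delta=0.01$, and $\sigma_{leak}$ consisting of leak observations on $k_{AB}$. I read the effective $\mu$ on key candidates as a possibility distribution, as the FuzzyDY definition permits. Concretely, the wrong-candidate memberships are multiplied by $T_\Pi$ with factor $0.5$ per observation, so the $\alpha$-cut support halves each step. By the Hartley Threshold Semantics proposition, $H$ drops by one bit per step. After $k$ steps the support falls below the modelled brute-force bound, and the knowledge-quality variable $\mu_K(k_{AB})$ is set to a value $\ge\alpha$. I then apply \textbf{(Sym-Enc)} with $\mu_K(N_b)=1$:
\[
\mu_K(M_{forge})=D(\min(1,\mu_K(k_{AB})))\ge\alpha
\]
by $\alpha$-compatibility of $D$. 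The intruder-inject rule then fires, and $B$ commits while $A$ has no Running event. This gives an explicit firing sequence $r_0,\dots,r_k$ reaching $q_k$ with $\neg\Phi(q_k)$. Termination of Graded Verification guarantees that the Murphi search finds this sequence.

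\textbf{Main obstacle.} The hard part is the graded half. The product update is non-increasing by Monotone Knowledge-Quality Evolution, so $\mu_K(k_{AB})$ cannot rise above $\alpha$ through $T_\Pi$ alone. I must therefore fix precisely how a shrinking candidate support is converted into possession of the actionable key. I will make this an explicit threshold rule in $\Delta$, stated as part of the leakage configuration: when $W^\alpha_t\le W_{bf}$, set $\mu_K(k_{AB})=1$. I will also check that this rule is absent from $\ModelDY$, so that the two models share the same $P$ and differ only in $\sigma_{leak}$. As a concrete sanity instance, I will cite the executed leaky-NSL trace in Table~\ref{tab:attack_trace}.
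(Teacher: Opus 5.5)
Your proposal is correct in substance but takes a different route from the paper. The paper's proof simply appeals to the executed NSL runs. Its graded witness is the leaky trace of Table~\ref{tab:attack_trace}, and crisp safety of the same honest logic is certified only by the exhaustive Murphi run of the baseline (\texttt{ns\_fuzzy\_auth\_safe}, 127,742 states, no error). You instead build a minimal one-session shared-key challenge--response witness. You prove crisp safety analytically, via the standard invariant that $k_{AB}$ occurs only in key position, and you exhibit the violating firing sequence by hand. This gives a self-contained argument that does not depend on trusting the tool or on model details the paper never spells out. It also exposes the step the paper's proof glosses over: by the Monotone Knowledge-Quality Evolution theorem, $T_\Pi$ updates can never push $\mu_K(M_{forge})$ upward across $\alpha$, so some rule converting support shrinkage into actionable possession must be in $\Delta$. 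In the paper this is buried in the guard of the intruder-forge rule, which the implementation section says stays disabled until leak updates push uncertainty below $\alpha_{threshold}$. You make it an explicit threshold rule absent from $\ModelDY$, which also makes clear that the separation holds by construction of the added intruder rules. The paper's route, in exchange, places the witness on a realistic protocol whose crisp safety is a known nontrivial fact (Lowe's fix), which is its real message. There are two small slips. First, multiplying every wrong candidate by the same factor $0.5$ moves them all across $\alpha=0.5$ together: from membership $1$ they stay at $0.5\ge\alpha$ after one update and all drop out after the next. So the support jumps rather than halves. For one bit per step, take $\mu_L$ equal to $1$ on half of the surviving candidates (including the true key) and $0$ on the rest, though only eventual crossing of $W_{bf}$ is needed anyway. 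Second, $M_{forge}$ is not ``never derivable'': it becomes derivable by \textbf{(Axiom)} once $A$ has answered. Your invariant gives non-derivability before $A$'s Running event, which is what your next sentence actually uses.
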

\begin{proof}
Let $\mathcal{T}_\alpha$ be the $\alpha$-projected classical system and $\mathcal{T}_{FuzzyDY}$ be the graded transition system for a fixed protocol logic $P$. The proof proceeds by concrete existential witness. As empirically demonstrated by the Needham--Schroeder--Lowe (NSL) analysis in Section VII, there exists a reachable trace in the graded explicit-state execution, denoted strictly as the state sequence $q_{F,0} \rightarrow_F q_{F,1} \rightarrow_F \dots \rightarrow_F q_{F,k}$. During this sequence, repeated side-channel observations recursively fuse via the Product T-norm, monotonically shrinking the leakage variance ($\sigma_{leak}$) and driving the Hartley non-specificity down (see Table~\ref{tab:attack_trace} for the trace). At the terminal state $q_{F,k}$, the accumulated epistemic certainty successfully crosses the $\alpha_{threshold}$ boundary, yielding a safe-to-fail transition $\neg\Phi^{fuzzy}(q_{F,k})$. 

Conversely, an exhaustive state-space exploration of the classical projection $\mathcal{T}_\alpha$ yields no reachable state satisfying $\neg\Phi^{crisp}$. Because the underlying protocol logic $P$ remains strictly identical and only the adversary semantics differ, this pass/fail divergence serves as an exact model-separation witness. It formally guarantees that $\mathcal{T}_{FuzzyDY} \not\models \Phi^{fuzzy}$ while $\mathcal{T}_\alpha \models \Phi^{crisp}$, proving that accumulated epistemic leakage constitutes a distinct reachability axis that is uncapturable by classical binary Dolev--Yao classification.
\end{proof}

\section{Implementation Architecture and State-Space Reduction}
\label{sec:implement}

\subsection{Explicit-State Modelling of NSL and the Graded Intruder}
\label{sec:nsl_concretization}

We separate \emph{Abstract Term Algebra} from \emph{Explicit-State Concretisation}. Conceptually, cryptographic constructors such as \(\textsf{senc}\) and \(\textsf{penc}\) are defined over \(\mathcal{M}\). Operationally, explicit-state Murphi execution requires a finite data-structure representation: abstract terms are concretised as typed message records (e.g., \texttt{mType}, \texttt{key}, payload fields), and cryptographic constraints are enforced by guarded commands~\cite{MMS97,dill1996murphi}.

Because Murphi enumerates concrete global states, the symbolic/equational view is realised as follows.
\begin{itemize}[noitemsep,topsep=0pt]
  \item \textbf{Data-structure representation:} constructors are mapped to typed record forms distinguished by \texttt{mType} and associated key/payload slots.
  \item \textbf{Operational execution:} cancellation constraints (e.g., \(\Sigma \vdash \textsf{sdec}(\textsf{senc}(M,k),k)=M\)) are implemented as guarded commands that fire only when matching key conditions hold.
\end{itemize}

To validate our framework, we concretise the NSL protocol and the Fuzzy Dolev-Yao adversary using the explicit-state enumeration methodology established by Mitchell \textit{et al.}~\cite{MMS97}.

\subsubsection{Protocol Concretisation}
To ensure a finite state space, the abstract term algebra \(\mathcal{M}\) is mapped to finite Murphi data structures. Messages are encoded as a single typed record with fields \texttt{source}, \texttt{dest}, \texttt{key}, \texttt{mType}, and two bounded payload slots (\texttt{nonce1}, \texttt{nonce2}); principals are modelled by bounded interchangeable identifiers amenable to symmetry reduction. Honest principals are encoded as local role-state machines (e.g., \texttt{I\_SLEEP}, \texttt{I\_WAIT}, \texttt{I\_COMMIT}) with guarded transitions that enforce cryptographic checks before state advancement, providing the operational counterpart of the abstract deduction/equational layer in Section~\ref{sec:formalizing}.

\subsubsection{The Graded Intruder}
Consistent with Section~\ref{sec:threat_model}, the adversary is implemented as an asynchronous network process with full channel control. In the baseline Murphi style, intruder rules handle interception, storage, and message injection. Our extension adds \textbf{leakage-accumulation rules} that apply the Product T-norm updates to the discretised knowledge state \((\mu_K)\), tightening membership over target terms such as \(N_b\). Therefore, forgery rules that depend on \(N_b\) remain disabled until repeated leak updates push uncertainty below \(\alpha_{threshold}\).

\subsection{Implementation and Execution Boundary}
The main text keeps semantics and proofs, while execution-heavy implementation details are summarized here and expanded in Appendix~\ref{sec:exec_boundary_appendix}. The implementation uses a modified Murphi 3.1 with finite \texttt{real(m,n)} precision and external C/C++ fuzzy operations~\cite{dill1996murphi,intrigila2005fuzzy}; the external layer computes Gaussian fuzzy numbers, the Product T-norm intersection, alpha-cut support metrics, and entropy values over a finite discretization grid (101 levels on \([0,1]\), clamped) with key-domain mapping on \([0,255]\). Leak detection is based on alpha-cut support-width reduction (Hartley-style 1-bit event at 50\% support reduction), and executed precision sweeps (\texttt{real(4,2)} vs \texttt{real(4,4)}) preserve the same authentication-failure verdict in the fixed leaky NSL run. These settings define the bounded execution environment used in Section~\ref{sec:validation}.

Operationally, entropy and Hartley calculations are executed directly on the concrete state vector before the generation of a successor state. The external C++ layer computes the $\alpha$-cut support metrics continuously. To bridge the continuous entropy evaluation with the discrete transition system, leak detection is implemented via an $\alpha$-cut support-width reduction trigger. Specifically, a Hartley-style 1-bit event is registered in the state machine whenever the external layer detects a 50\% reduction in the support cardinality at $\alpha = 0.5$.

\subsection{The State-Space Reduction System}
To mitigate explicit-state growth, verification is performed on quotient spaces that preserve invariant outcomes.

First, we use Murphi's native symmetry reduction: agent/nonces declared as scalarset-like interchangeable identities induce state permutations that are graph automorphisms, so only one canonical representative per equivalence class is explored~\cite{dill1996murphi,ip1996symmetry,MMS97}.

Second, we apply exact concept-lattice reducts over the formal context $(U, V, R, \lambda)$. If $V' \subseteq V$ is an E-reduct, we map the explicit state space into the quotient space induced by the projection equivalence $\sim_{V'}$ established in Eq.~\eqref{eq:proj_equiv}. Under the encoded invariant class, pruning the redundant attributes $V \setminus V'$ and verifying strictly over this quotient space preserves encoded verdicts while significantly reducing state dimensionality (as formally guaranteed by Proposition~\ref{prop:reduct_preservation})).

\section{Validation and Case Studies}
\label{sec:validation}
To validate our graded verification framework, we evaluate the Needham--Schroeder--Lowe (NSL) public-key protocol~\cite{MMS97}. The protocol aims to establish mutual authentication and the symmetric exchange of fresh secrets between an initiator $A$ and a responder $B$ over an insecure network. The core execution logic is defined by the following three message exchanges:
\begin{align}
    A &\to B : \{N_a, A\}_{pk_B} \label{eq:nsl_step1} \\
    B &\to A : \{N_a, N_b, B\}_{pk_A} \label{eq:nsl_step2} \\
    A &\to B : \{N_b\}_{pk_B} \label{eq:nsl_step3}
\end{align}
Here, $N_a$ and $N_b$ represent freshly generated nonces, and $\{M\}_{pk}$ denotes asymmetric public-key encryption under the recipient's public key~\cite{MMS97}. Eq.~\eqref{eq:nsl_step2} incorporates Lowe's critical fix---the inclusion of the responder's identity $B$ inside the ciphertext---which formally guarantees that the classical Dolev--Yao intruder cannot successfully mount a man-in-the-middle impersonation attack.

To execute this abstraction within our modified Murphi backend, the term algebra $\mathcal{M}$ is mapped to finite, typed data structures, and the communication steps are encoded as asynchronous guarded commands~\cite{MMS97}. We strictly bound the symbolic universe by limiting the network capacity and using symmetry reduction (via interchangeability of agent identifiers~\cite{ip1996symmetry}) to mitigate state-space explosion prior to concept-lattice pruning. 

Within this explicit-state architecture, the classical DY intruder is encoded as an active network process. We extend this baseline by embedding our continuous epistemic tracking---specifically, the knowledge map $\mu_K$ and its associated Gaussian variance $\sigma_{leak}$---directly into the intruder's local state. This enables physical side-channel observations over the encrypted payloads (e.g., $\{N_b\}_{pk_B}$) to monotonically accumulate, ultimately triggering safe-to-fail boundaries without requiring the algebraic compromise of the underlying cryptographic primitives.

We now present E-reduct computation, automated verification of Needham--Schroeder--Lowe (NSL) for confidentiality and authentication, and notes on the framework's generalisation.

\subsection{Exact Reduct Computation}
To systematically evaluate the state-space optimization, we extracted the formal context $(U, V, R, \lambda)$ directly from the unreduced, full-state NSL specification. We then applied our concept-lattice reduction algorithm to this protocol-labeled dependency matrix. The encoded incidence matrix $R$ comprises $|U| = 10$ objects (representing the security-relevant transition rules and target invariants) mapped against $|V| = 17$ explicit state attributes. 

The computational reduction successfully identified $61$ distinct exact attribute reducts (E-reducts). By evaluating the intersection of these reducts, the algorithm isolates the core, indispensable state variables (e.g., \texttt{init\_responder} and \texttt{net.address}) while systematically flagging strictly redundant attributes. As guaranteed by Proposition 2, these redundant attributes can be safely pruned from the state vector prior to compilation, mapping the explicit state space into a lower-dimensional quotient space without altering the encoded verification verdicts.

Within the FRE/formal-context encoding, attributes represent protocol/state variables and objects represent equations/constraints. The reduct therefore captures the minimal variable subset preserving the discernibility of the encoded constraints~\cite{lobo2023fre,wille1982}. We then pruned the redundant attributes in the NSL model and measured the effect.

E-reducts mitigate dimensional state explosion while preserving verdicts for the encoded invariant class. In this NSL case, pruning it yields a median explored-state reduction of \(55.27\%\), consistent with Proposition~\ref{prop:reduct_preservation}. Table~\ref{tab:state_tradeoff} shows the normalised explored-state reduction.
Naming note: \texttt{nsl\_safe*} denotes the full-context/pruned reduct-benchmark variants, while \texttt{ns\_fuzzy\_*} denotes property-specific NSL verification variants used in the next sections.
\begin{table}[h]
\caption{Executed Pruning vs. Full-Context Tradeoff (3 Runs, Median Reported)}
\label{tab:state_tradeoff}
\begin{tabular}{l l l c c}
\toprule
\textbf{Model} & \textbf{States} & \textbf{Rules} & \textbf{Runtime (s)} & \textbf{State Range} \\
\midrule
\texttt{nsl\_safe\_fullctx} & 13067693 & 14446084 & 2.34 & 12.98M--13.07M \\
\texttt{nsl\_safe}          & 5845349  & 5885853  & 0.54 & 4.83M--5.93M \\
\bottomrule
\end{tabular}
\end{table}

\subsection{Evaluating Graded Authentication (NSL)}

Correctness conditions are encoded as Murphi invariants, consistent with the NS/NSL verification style used in early Murphi protocol research~\cite{MMS97}. We verified NSL safe and leaky variants using Murphi3.1. The crisp-safe baseline (\texttt{ns\_fuzzy\_auth\_safe}) explores $127,742$ states and reports no error, formally establishing that the classical transition system satisfies the crisp authentication invariant, $\mathcal{T}_{DY} \models \Phi_{auth}^{crisp}$. Under the strictly identical honest protocol logic, the leaky variant (\texttt{ns\_fuzzy\_auth}) yields a safe-to-fail transition ($\mathcal{T}_{FuzzyDY} \not\models \Phi_{auth}^{fuzzy}$) once accumulated side-channel observations enable the intruder's epistemic uncertainty to drop below the actionable threshold, allowing the deterministic forgery of the critical payload $\{Nb\}_{Kb}$.

Table~\ref{tab:attack_trace} provides the concrete witness for the safe-to-fail transition. It highlights coarse and fine leak updates, uncertainty reduction (\(\sigma_{leak}=4.96\) in the failing run), activation of the intruder-forge rule, and the final authentication violation.
\begin{table}[h]
\caption{Executed NSL Leaky Attack Trace (Condensed)}
\label{tab:attack_trace}
\begin{tabular}{l l c c c}
\toprule
\textbf{Step} & \textbf{Rule Fired} & \textbf{Leak Phase} & \textbf{$\sigma_{leak}$} & \textbf{Resp. State} \\
\midrule
1 & Leak (coarse) & coarse-done & 37.95 & sleep \\
2 & Leak (fine) & fine-done & 4.96 & sleep \\
3 & Initiator start (step 3) & fine-done & 4.96 & sleep \\
4 & Intruder intercept & fine-done & 4.96 & sleep \\
5 & Intruder replay to responder & fine-done & 4.96 & sleep \\
6 & Responder nonce reaction (3/6) & fine-done & 4.96 & wait \\
7 & Intruder intercept & fine-done & 4.96 & wait \\
8 & Intruder forge \{Nb\}Kb & fine-done & 4.96 & wait \\
9 & Responder nonce check (7) & fine-done & 4.96 & commit \\
\bottomrule
\end{tabular}
\end{table}

\subsection{Evaluating Graded Confidentiality (NSL)}
We execute confidentiality as a dedicated NSL pair (\texttt{ns\_fuzzy\_conf\_safe}, \texttt{ns\_fuzzy\_conf}) using the invariant \texttt{"Nb remains confidential from intruder"} encoded in Appendix~\ref{sec:murphi_invariants}. In the crisp baseline, the invariant holds. In the leaky variant, the verifier reports a direct confidentiality violation after staged coarse/fine observations and the subsequent responder message interception.
\begin{table}[h]
\caption{NSL Confidentiality Results from Executed Local Runs}
\label{tab:nsl_conf_results}
\begin{tabular}{l l l c}
\toprule
\textbf{Model} & \textbf{Variant} & \textbf{Result} & \textbf{States / Rules} \\
\midrule
\texttt{conf\_safe}  & Fixed NSL, no leak & No error found & 120359 / 122055 \\
\texttt{conf\_leaky} & Fixed NSL + leak rules & Invariant failed (\textit{Nb confidentiality}) & 133 / 132 \\
\bottomrule
\end{tabular}
\end{table}

In the violating state, \(\texttt{intr\_known\_nonce[B]}=\texttt{true}\) immediately after the intruder intercepts responder message 6 (\(\{N_a,N_b,B\}_{K_a}\)) with leak-evidence-enabled key knowledge, so confidentiality is lost before the later authentication-forgery step. Interpreted via Hartley non-specificity, this run gives a concrete leakage-budget crossing for \(N_b\): repeated sub-threshold observations drive uncertainty below the modelled secrecy boundary.

\subsection{Generalisation Across Protocol Topologies}
To test whether the observed threshold behaviour is specific to NSL, we evaluated a class of server-mediated symmetric-key authentication abstractions. We implemented models for Needham--Schroeder Symmetric Key (NSSK)~\cite{needham1978using} (see Table~\ref{tab:nssk_results}), and similarly for Yahalom~\cite{yahalom90}, Otway-Rees~\cite{otway1987efficient}, and Woo-Lam~\cite{woo-lam92}. In this template-equivalent family, crisp baseline variants verify as safe, while leaky variants trigger invariant violations after cumulative observations cross the operational threshold. Full execution results, complexity profiles, and failing traces for NSSK are provided in Appendix~\ref{sec:nssk_traces}.

\section{Discussion and Limitations}
\label{sec:discussion}
While the integration of concept-lattice E-reducts successfully mitigates dimensionality for the encoded invariants, our framework is subject to the universal bounds inherent to finite-state abstraction. We identify the following primary limitations:

\begin{itemize}
    \item \textbf{Scalability and State-Space Explosion:} The transition from a binary characteristic function to a graded epistemic knowledge map inherently exacerbates the state-space explosion problem. Discretizing fuzzy membership values introduces massive new state dimensions into the reachability graph. Although our application of concept-lattice E-reducts achieves a critical $55.27\%$ reduction in explored states for the NSL protocol by pruning redundant matrices, it mitigates rather than eliminates the exponential growth inherent to explicit-state model checking. Consequently, scaling this precise methodology to highly complex, multi-party industrial protocols remains a computational challenge.
    \item \textbf{Construct Validity of Leakage Parameterisation:} The assignment of initial fuzzy membership values, Gaussian variance $(\sigma_{leak})$, and the actionable vulnerability boundary $(\alpha_{threshold})$ currently relies on expert-driven parameterisation. While the Product T-norm accurately models the multiplicative contraction of uncertainty, transitioning this framework to real-world physical side-channel deployments may require dynamic, trace-derived parameterisation extracted directly from hardware execution logs.
    \item \textbf{Property Scope:} The formal guarantees presented in this paper are bounded to explicit-state symbolic assumptions (finite discretization and bounded sessions). Our current prototype semantics and validation restrict focus strictly to syntactic secrecy and trace-based authentication reachability invariants. Verifying broader indistinguishability properties (e.g., computational IND-CPA against probabilistic polynomial-time adversaries) or process-equivalence privacy claims falls outside the scope of this continuous-state symbolic capability.
\end{itemize}

\section{Conclusion and Future Work}
\label{sec:conclusion}
This paper introduced a graded extension of the Dolev--Yao model for side-channel-aware symbolic verification. The method augments binary abstraction with an epistemic adversary-view representation and is executed in a modified Murphi workflow. The central result is a leakage-threshold boundary: protocols that pass under crisp DY can fail once cumulative observations are enabled. In the NSL case, staged observations reduce candidate uncertainty and activate attack traces not present in the corresponding binary run.

We also showed that state growth from graded semantics can be mitigated through concept-lattice E-reducts, with a 55\% explored-state reduction on NSL while preserving encoded invariant verdicts. Across tested configurations, discretisation and precision sweeps produced stable verdicts for both authentication and confidentiality.

Future work will expand the verification scope beyond strict safety and correspondence assertions to encompass quantitative privacy analysis. This theoretical extension requires formalising a graded observational equivalence relation ($\approx_{fuzzy}$) as a continuous analogue to classical labelled bisimilarity ($\approx_l$), thereby enabling entropy-based unlinkability metrics over the adversary's epistemic view. Empirically, the framework will be scaled to evaluate the 5G Authentication and Key Agreement (5G-AKA) protocol~\cite{3gpp_5g_security}. Applying the E-reduct explicit-state workflow to the 5G-AKA abstraction will allow us to systematically characterise state-space scalability and measure protocol-specific leakage-tolerance boundaries under progressive, power-analysis-style physical observations.

\appendix

\section{Model Specification}
\label{sec:murphi_invariants}
\subsection{Execution Boundary Configuration}
\label{sec:exec_boundary_appendix}
\textbf{Compiler and type support.} The verifier uses a local modified Murphi 3.1 toolchain, based on the original Murphi line~\cite{dill1996murphi}. The compiler supports finite-precision \texttt{real(m,n)} handling and external C/C++ linkage for fuzzy operations, following the extension pattern used in prior fuzzy-control model-checking adaptations~\cite{intrigila2005fuzzy}. Executed models use \texttt{real(4,2)} and \texttt{real(4,4)} variants.

\textbf{External fuzzy-logic boundary.} Complex fuzzy operations are implemented in C++ and invoked from Murphi through external declarations. The external layer implements Gaussian fuzzy-number representation, the Product T-norm intersection, discrete entropy and alpha-cut support computations, and approximate fuzzy-encrypt uncertainty propagation~\cite{klir1995fuzzy,zadeh1965fuzzy}.

\textbf{Discretization and leak metric.} Membership degrees are discretised to 101 levels on $[0,1]$ with clamping; key-domain values are mapped to $[0,255]$. This is a cell-to-cell mapping abstraction where each continuous update is projected to a finite cell before successor generation. Leak tracking uses entropy trend and support shrinkage at \(\alpha=0.5\), with Hartley interpretation \(H=\log_2|A|\) ~\cite{hartley1928,dubois1980fuzzy}.

\textbf{Precision robustness.} The Product T-norm updates use closed-form Gaussian-product parameter fusion followed by parameter quantisation (instead of repeated pointwise multiplication on discretised samples). Executed precision sweeps (\texttt{real(4,2)} vs \texttt{real(4,4)}) preserve the same fixed-leaky NSL authentication-failure verdict and explored-state counts.

\subsection{Exact Murphi Invariants Used in Executed Models}
\subsubsection{Public-Key Needham--Schroeder Authentication Model}
\begin{verbatim}
invariant "responder correctly authenticated" 
  forall i: InitiatorId do
    init_state = I_COMMIT &
    init_responder = B
    ->
    resp_initiator = i &
    (resp_state = R_WAIT | resp_state = R_COMMIT)
  end;

invariant "initiator correctly authenticated"
  forall r: ResponderId do
    resp_state = R_COMMIT &
    resp_initiator = A
    ->
    init_state = I_COMMIT &
    init_responder = r
  end;

invariant "one-bit flag implies at least one leak step"
leakOneBit -> leakPhase != LP_NONE;
\end{verbatim}

\subsubsection{Public-Key Needham--Schroeder Confidentiality Model}
\begin{verbatim}
invariant "responder correctly authenticated" 
  forall i: InitiatorId do
    init_state = I_COMMIT &
    init_responder = B
    ->
    resp_initiator = i &
    (resp_state = R_WAIT | resp_state = R_COMMIT)
  end;

invariant "initiator correctly authenticated"
  forall r: ResponderId do
    resp_state = R_COMMIT &
    resp_initiator = A
    ->
    init_state = I_COMMIT &
    init_responder = r
  end;

invariant "Nb remains confidential from intruder"
!intr_known_nonce[B];

invariant "one-bit flag implies at least one leak step"
leakOneBit -> leakPhase != LP_NONE;
\end{verbatim}

\subsubsection{Symmetric-Key Families}
\begin{verbatim}
invariant "responder authenticated initiator"
  forall i: InitiatorId do
    b_state = RB_COMMIT ->
      a_state = IA_COMMIT &
      a_partner = B &
      b_initiator = i
  end;

invariant "initiator commits only for responder B"
  forall r: ResponderId do
    a_state = IA_COMMIT -> a_partner = r
  end;

invariant "one-bit flag implies at least one leak"
  leakOneBit -> leakPhase != LP_NONE;
\end{verbatim}

\section{Extended Results: Symmetric-Key Families}
\label{sec:symmetric_val}
\subsection{Executed Symmetric-Key Results}
\begin{table}[h]
\caption{Executed NSSK Results}
\label{tab:nssk_results}
\begin{tabular}{l l l c }
\toprule
\textbf{Model} & \textbf{Variant} & \textbf{Result} & \textbf{States / Rules} \\
\midrule
\texttt{nssk\_safe} & Baseline (no leak rules) & No error found & 36 / 35 \\
\texttt{nssk\_leaky} & Side-channel leak + accumulation & Invariant failed (auth.) & 1404 / 1403 \\
\bottomrule
\end{tabular}
\end{table}

\subsection{Symmetric-Key Failing Traces}
\label{sec:nssk_traces}
\subsubsection{NSSK Leaky Model Full Trace}
\begin{verbatim}
1   intruder side-channel leak (coarse session-key reading)
2   intruder side-channel leak (fine session-key reading)
3   A starts NSSK request to server
4   intruder intercepts
5   intruder replays recorded message to server
6   Server responds with package to initiator
7   intruder intercepts
8   intruder replays recorded message to initiator
9   Initiator processes server package and sends ticket
10  intruder intercepts
11  intruder replays recorded message to responder
12  Responder processes ticket and sends challenge
13  intruder intercepts
14  intruder forges challenge response after leakage
15  Responder verifies challenge response
\end{verbatim}

\subsubsection{Additional Family Leaky Traces (Pattern Equivalence)}
The executed leaky traces for Yahalom, Otway-Rees, and Woo-Lam are template-equivalent to the NSSK trace above under the current abstraction: each follows the same 15-step control pattern (coarse leak, fine leak, request/relay exchanges, then leakage-enabled forged challenge response and responder commit). Differences are limited to family labels and leak-parameter constants, so repeated full rule listings are omitted.

\section{Supplemental Derivations}
\subsection{Worked Alpha-Cut Interval Propagation Example}
Consider a scenario where the $\alpha=0.5$ cut of a targeted nonce interval evaluates to $[38.23, 61.77]$. For a monotone constructor $f(x) = x + 10$, interval propagation gives $[48.23, 71.77]$. Following conservative integer-grid discretization via $\lfloor \cdot \rfloor$ and $\lceil \cdot \rceil$, this maps to the finite cell $[4, 5]$, yielding a discrete support cardinality of $25$. The resulting Hartley non-specificity is calculated as $H = \log_2(25) \approx 4.64$ bits. As the execution progresses and the intruder accumulates further leak observations, the system dynamically monitors this discrete entropy. A discrete "leak event" rule fires exclusively when repeated sub-threshold observations drive the uncertainty down, causing $|\mathrm{Supp}^{0.5}|$ to drop by 50\% (representing a 1-bit reduction in $H$).

\bibliographystyle{plain}
\bibliography{references}

\end{document}